\documentclass[runningheads]{llncs}
\usepackage[T1]{fontenc}
\usepackage{amsmath, amssymb}
\usepackage{mathtools}
\usepackage{comment}

\usepackage{tikz}
\usetikzlibrary{fit,backgrounds, arrows.meta, calc, positioning, matrix}

\usepackage{algorithm}
\usepackage{algpseudocode}
\usepackage{placeins} 
\usepackage{url}
\newcommand{\doi}[1]{\url{https://doi.org/#1}}

\providecommand{\qedsymbol}{\ensuremath{\square}}

\newcommand{\qedbox}{\hfill\qedsymbol}

\spnewtheorem{claimnum}[theorem]{Claim}{\bfseries}{\itshape}

\title{The Path-Extremal Conjecture for Zero Forcing: Distance-Hereditary Graphs and a Split-Decomposition Reduction}

\titlerunning{A Split-Decomposition Approach to the Path-Extremal Conjecture}
\usepackage{bbding}
\author{Samuel German \Envelope}
\authorrunning{Samuel German}
\institute{University of California, San Diego, USA \\
\email{sgerman@ucsd.edu}}

\begin{document}
\maketitle

\begin{abstract}
For an \(n\)-vertex graph \(G\), let \(z(G;k)\) denote the number of zero forcing
sets of size \(k\). A conjecture of Boyer et al.\ asserts that the path \(P_n\)
maximizes these numbers coefficientwise among all \(n\)-vertex graphs; equivalently,
the zero forcing polynomial of every \(n\)-vertex graph should be coefficientwise
dominated by that of \(P_n\). We prove this path-extremal conjecture for
distance-hereditary graphs. This extends the previously known tree case to a much
larger class that includes, in particular, all trees and all cographs.

We then use canonical split decomposition to push the argument one step beyond the
distance-hereditary setting. Specifically, we show that if a split-prime graph
\(H\) and all of its induced subgraphs are path-extremal, then every connected graph whose canonical split decomposition has a
unique prime bag whose label graph is isomorphic to \(H\) is also
path-extremal. As a corollary, for each fixed \(m\), if every induced subgraph of every
split-prime graph on at most \(m\) vertices is path-extremal, then so is every
connected graph whose canonical split decomposition has a unique prime bag of
size at most \(m\) (Corollary~\ref{cor:bounded-prime-size}). Thus, on these classes, the conjecture reduces to a finite
verification problem on bounded-order prime cores.

Our proofs combine two counting mechanisms
for non-forcing sets---fort obstructions arising from twin pairs and a leaf
recurrence---with the accessibility description of graph-labelled trees in the
canonical split decomposition. This yields a new positive instance of the
path-extremal conjecture and identifies a natural structural frontier for further
progress.
\keywords{zero forcing \and zero forcing polynomial \and distance-hereditary graph \and split decomposition \and graph-labelled tree}
\end{abstract}

\section{Introduction}

Zero forcing is a deterministic propagation process on a graph: starting from an
initial blue set \(S\subseteq V(G)\), a blue vertex with exactly one white
neighbor forces that neighbor to become blue. Introduced in connection with the
maximum nullity and minimum rank problems for matrices described by graphs,
zero forcing has become a standard meeting point of graph theory and linear
algebra \cite{AIM2008,FallatHogben2007}. The surrounding theory has also
expanded to related zero forcing parameters and minimum-rank questions,
including positive semidefinite zero forcing \cite{BarioliEtAl2010}. Beyond its
original linear-algebraic role, zero forcing has been connected to propagation
time \cite{HogbenEtAl2012}, strong structural controllability
\cite{TrefoisDelvenne2015}, a priori sensor placement and error amplification
in linear recovery \cite{KenterLin2019}, and even logic-circuit constructions
based on forcing gadgets \cite{BurgarthEtAl2015}. Probabilistic variants have
also attracted attention, beginning with probabilistic zero forcing
\cite{KangYi2013} and continuing with work on random graphs and expected
propagation time \cite{EnglishMacRuryPralat2021}.

A natural refinement of the zero forcing number is to count forcing sets by
their size. For an \(n\)-vertex graph \(G\), let \(z(G;k)\) denote the number of
zero forcing sets of size \(k\), and consider the zero forcing polynomial
$\mathcal{Z}(G;x)=\sum_{k=1}^{n} z(G;k)x^k$.
Boyer et al.\ introduced this polynomial and initiated the systematic study of
the counting problem, establishing structural properties of
\(\mathcal{Z}(G;x)\) such as extremal coefficients, multiplicativity,
unimodality questions, and recognizability phenomena \cite{BoyerEtAl2019}.
The counting viewpoint is also probabilistically natural: the ratio
\(z(G;k)/\binom{n}{k}\) is the probability that a uniformly random \(k\)-subset
of \(V(G)\) is zero forcing, while recent work has studied closely related
models in which vertices are sampled independently at random
\cite{CurtisEtAl2024}. In this framework, Boyer et al.\ formulated a striking
coefficientwise extremal conjecture: among all graphs on \(n\) vertices, the
path \(P_n\) should maximize the number of zero forcing sets of every size, that
is,
$z(G;k)\le z(P_n;k)\qquad\text{for all }0\le k\le n$. 
Equivalently, the path should maximize the zero forcing polynomial
coefficientwise \cite{BoyerEtAl2019}.

This path-extremal conjecture has recently seen significant progress.
Menon and Singh developed a graph-operation framework for the numbers
\(z(G;k)\) and used it to prove the conjecture for outerplanar graphs and
threshold graphs, obtaining the tree case as a corollary
\cite{MenonSingh2025}. Their work makes clear that the conjecture is especially
amenable on graph classes with canonical recursive structure. At the same time,
the random-set perspective of Curtis et al.\ further underscores the importance
of coefficientwise bounds by studying the probability that a random initial set
is zero forcing \cite{CurtisEtAl2024}. The present paper continues this
decomposition-driven approach.

Our first main result proves the path-extremal conjecture for
distance-hereditary graphs (Theorem~\ref{thm:dh-path-extremal}). This class is a
natural next target for several reasons. First, distance-hereditary graphs form
a large and classical graph class: they admit a one-vertex extension
characterization via pendant, false-twin, and true-twin additions
\cite{BandeltMulder1986}, and, equivalently, they are exactly the totally
decomposable graphs for split decomposition, namely those whose canonical
graph-labelled tree has only clique and star bags \cite{GioanPaul2012}. Second,
these structural descriptions match the two basic counting mechanisms that now
underpin many positive results on the conjecture: twin pairs produce small forts
and hence many non-forcing sets, while leaves give recursive lower bounds on the
number of non-forcing sets \cite{BoyerEtAl2019,MenonSingh2025}. In this sense,
distance-hereditary graphs are the largest standard class for which the
``twin/leaf'' strategy is expected to work without any prime obstruction.

Our second main result pushes this perspective one step beyond the
distance-hereditary setting. Using the canonical split decomposition, we prove a
conditional extension for graphs with a unique prime bag
(Theorem~\ref{thm:one-prime-bag}): if a fixed split-prime graph \(H\) and all of
its induced subgraphs are path-extremal, then every connected graph whose canonical split decomposition has a unique prime bag
whose label graph is isomorphic to \(H\) is also path-extremal. As a corollary, for each fixed \(m\), if every induced subgraph of every
split-prime graph on at most \(m\) vertices is path-extremal, then so is every
connected graph whose canonical split decomposition has a unique prime bag of
size at most \(m\) (Corollary~\ref{cor:bounded-prime-size}). This gives a clean reduction
from an infinite family of graphs to finitely many bounded-order prime cores,
expressed in the natural graph-labelled-tree language of split decomposition
\cite{GioanPaul2012}.

These results are significant in three ways. First,
Theorem~\ref{thm:dh-path-extremal} gives a substantial new positive instance of
the path-extremal conjecture, extending the previously known tree case to the
much larger distance-hereditary class, which in particular contains all trees,
all cographs, and many graphs built by repeated pendant and twin extensions
\cite{BandeltMulder1986,GioanPaul2012,MenonSingh2025}. Second,
Theorem~\ref{thm:one-prime-bag} identifies a precise structural frontier at
which the purely distance-hereditary argument stops and shows how far it can be
pushed once a single prime component is allowed. Third,
Corollary~\ref{cor:bounded-prime-size} suggests a concrete hybrid strategy for
future work: combine structural induction on split decomposition with finite
verification of bounded prime bases. This reduction is not only combinatorially
natural but also algorithmically appealing, since graph-labelled trees provide a
standard framework for manipulating split decompositions
\cite{GioanPaul2012}.

At a technical level, the proofs combine three ingredients: the fort
obstruction arising from twin pairs, the leaf recurrence for non-forcing sets,
and the accessibility description of canonical split decomposition
\cite{BoyerEtAl2019,MenonSingh2025,GioanPaul2012}. Together these ingredients
explain both why distance-hereditary graphs are a natural positive class and
why unique-prime-bag families form the next meaningful level of generality.

\section{Definitions and Preliminaries}

All graphs in this paper are finite, simple, and undirected. For a graph $G$, we write
$V(G)$ and $E(G)$ for its vertex set and edge set, respectively, and we let
$n := |V(G)|$. For a vertex $v \in V(G)$, the \emph{open neighborhood} of $v$ is
$N_G(v)$, the \emph{closed neighborhood} is $N_G[v] := N_G(v) \cup \{v\}$, and the
degree of $v$ is $\deg_G(v) := |N_G(v)|$. When the ambient graph is clear, we write
$N(v)$ and $\deg(v)$. For a set $X \subseteq V(G)$, the induced subgraph on $X$ is
denoted by $G[X]$, and $G-X$ denotes the induced subgraph $G[V(G)\setminus X]$.
For vertices $u,v \in V(G)$, we write $u \sim v$ if $uv \in E(G)$ and
$u \nsim v$ otherwise. The path on $n$ vertices is denoted by $P_n$.
A vertex of degree~$1$ is called a \emph{leaf} or \emph{pendant vertex}.

\subsection{Zero forcing}

Let $G$ be a graph and let $S \subseteq V(G)$. We regard the vertices of $S$ as
initially blue and the vertices of $V(G)\setminus S$ as initially white.
The \emph{zero forcing color-change rule} is the following:

\noindent
\emph{If a blue vertex has exactly one white neighbor, then it forces that neighbor
to become blue.}

Starting from $S$, we repeatedly apply the color-change rule until no further force
is possible. The set of vertices that are blue at the end of this process is denoted
by $\operatorname{cl}_G(S)$ and is called the \emph{zero forcing closure} of $S$ in $G$.

A set $S \subseteq V(G)$ is a \emph{zero forcing set} of $G$ if
$\operatorname{cl}_G(S) = V(G)$. The \emph{zero forcing number} of $G$ is
$Z(G) := \min\{|S| : S \subseteq V(G)\text{ is a zero forcing set of }G\}$.

For $0 \le k \le n$, let \[z(G;k) := \bigl|\{S \subseteq V(G) : |S|=k \text{ and } S \text{ is a zero forcing set of }G\}\bigr|\]
denote the number of zero forcing sets of size $k$. We also write $z'(G;k) := \binom{n}{k} - z(G;k)$
for the number of \emph{non-forcing} $k$-subsets of $V(G)$.

\noindent\textbf{Convention.}
For notational convenience, we extend the definitions of $z(G;k)$ and $z'(G;k)$
to all integers $k$ by setting
$z(G;k)=z'(G;k)=0$ whenever $k<0 \text{ or } k> |V(G)|$.
Likewise, throughout the paper we use the standard convention $\binom{a}{b}=0$ whenever $b<0 \text{ or } b>a$.

The \emph{zero forcing polynomial} of $G$ is
$\mathcal{Z}(G;x) := \sum_{k=1}^{n} z(G;k)x^k$. 

For two polynomials
$f(x)=\sum_{k \ge 0} a_k x^k$ and
$g(x)=\sum_{k \ge 0} b_k x^k$,
we write $f(x) \preceq g(x)$
if $a_k \le b_k$ for every $k$.
Thus $\mathcal{Z}(G;x) \preceq \mathcal{Z}(H;x)$ means
$z(G;k) \le z(H;k)$ for all $k$.

When $|V(G)|=n$, we say that $G$ is \emph{path-extremal} if
$z(G;k) \le z(P_n;k)$ for all $0 \le k \le n$.
Equivalently, $G$ is path-extremal if $z'(G;k) \ge z'(P_n;k)$ for all $0 \le k \le n$.

\subsection{Forts and twins}

A set $F \subseteq V(G)$ is a \emph{fort} if every vertex in $V(G)\setminus F$ has
either $0$ neighbors in $F$ or at least $2$ neighbors in $F$.
Forts serve as obstructions to zero forcing: if an initial blue set avoids a fort,
then the zero forcing process cannot enter that fort.

Two distinct vertices $u,v \in V(G)$ are \emph{true twins} if
$u \sim v $
and
$N(u)\setminus\{v\} = N(v)\setminus\{u\}$,
and they are \emph{false twins} if
$u \nsim v$ and $N(u)=N(v)$.
A pair $\{u,v\}$ is called a \emph{twin pair} if $u$ and $v$ are either true twins
or false twins.

\subsection{Distance-hereditary constructions}

Let $G$ be a graph and let $v \in V(G)$.

\begin{itemize}
    \item A \emph{pendant addition at $v$} is the operation of adding a new vertex
    $u \notin V(G)$ adjacent to exactly one vertex, namely $v$.

    \item A \emph{false-twin addition at $v$} is the operation of adding a new
    vertex $u \notin V(G)$ such that $u \nsim v$ and $N_{G'}(u)=N_G(v)$,
    where $G'$ denotes the resulting graph.

    \item A \emph{true-twin addition at $v$} is the operation of adding a new
    vertex $u \notin V(G)$ such that $u \sim v$ and
    $N_{G'}(u)\setminus\{v\}=N_G(v)$.
\end{itemize}

A graph is \emph{distance-hereditary} if it can be obtained from the one-vertex graph
$K_1$ by a sequence of pendant additions, false-twin additions, and true-twin additions.

\subsection{Splits and split decomposition}

From this point onward, whenever split decomposition is discussed, all graphs are
assumed to be connected.

A \emph{split} of a connected graph $G$ is a partition $(A,B)$ of $V(G)$ such that
$|A|,|B| \ge 2$ and there exist nonempty sets $A_1 \subseteq A$ and $B_1 \subseteq B$
for which
\[
ab \in E(G)
\quad\Longleftrightarrow\quad
a \in A_1 \text{ and } b \in B_1
\]
for every $a \in A$ and $b \in B$.
A connected graph is \emph{split-prime} if it admits no split and is neither a
clique nor a star. Equivalently, in the terminology of canonical split
decomposition, split-prime graphs are the nondegenerate no-split graphs that
occur as prime bag labels \cite{GioanPaul2012}.

We use the standard graph-labeled-tree model of the \emph{canonical split decomposition};
see, for example, \cite{GioanPaul2012}. Its internal nodes are called \emph{bags}.
Each bag carries a label graph, and each internal edge of the decomposition tree
corresponds to a distinguished vertex of the label graph, called a \emph{marker vertex}.
The label of each bag is one of the following:
\begin{itemize}
    \item a \emph{clique bag}, whose label graph is a clique;
    \item a \emph{star bag}, whose label graph is a star;
    \item a \emph{prime bag}, whose label graph is split-prime in the above
    nondegenerate sense.
\end{itemize}
Thus clique bags and star bags account for the degenerate cases, and a
\emph{prime bag} always means a nondegenerate no-split bag.

A bag is a \emph{leaf bag} if the corresponding node has degree~$1$ in the decomposition
tree. If $B$ is a leaf bag, then exactly one vertex of its label graph is a marker
vertex; we denote this vertex by $m_B$ and call it the \emph{marker} of $B$.
All other vertices of the label graph of $B$ are called the \emph{ordinary vertices}
of $B$.

If $B$ is a star leaf bag, then its label graph has a unique center.
We say that $B$ is \emph{center-attached} if $m_B$ is the center of that star, and
\emph{leaf-attached} otherwise.

The ordinary vertices appearing in the bag labels are the actual vertices of the
original graph \(G\); marker vertices are auxiliary. Adjacency in \(G\) is defined
by the standard \emph{accessibility} relation on the graph-labeled tree. More
precisely, if two ordinary vertices lie in the same bag, then they are adjacent
in \(G\) if and only if they are adjacent in the corresponding label graph.
Otherwise, let \(u\) and \(v\) be ordinary vertices lying in bags
\(B_0\) and \(B_t\), respectively, and let
$B_0,B_1,\dots,B_t$
be the unique path between these bags in the decomposition tree. For
\(0\le i<t\), let \(m_i^+\) be the marker vertex of \(B_i\) corresponding to the
tree edge \(B_iB_{i+1}\), and for \(1\le i\le t\), let \(m_i^-\) be the marker
vertex of \(B_i\) corresponding to the tree edge \(B_{i-1}B_i\). Then \(u\) and
\(v\) are adjacent in \(G\) if and only if \(u\) is adjacent to \(m_0^+\) in the
label of \(B_0\), \(m_i^-\) is adjacent to \(m_i^+\) in the label of \(B_i\) for
every \(1\le i\le t-1\), and \(m_t^-\) is adjacent to \(v\) in the label of
\(B_t\). We refer to this as the accessibility relation.

A graph-labelled-tree representation of a connected graph will be called
\emph{reduced} if it is reduced in the standard sense of split decomposition;
see \cite{GioanPaul2012}. For the purposes of this paper, we use only the
following standard consequences of reducedness.

\begin{lemma}\label{lem:reduced-basic-facts}
Let \(T\) be a reduced graph-labelled-tree representation of a connected graph
\(G\).
\begin{enumerate}
    \item Every clique bag and every star bag of \(T\) has at least three
    vertices in its label graph.
    \item If \(B\) is a clique leaf bag of \(T\), then \(B\) has at least two
    ordinary vertices.
    \item If \(B\) is a center-attached star leaf bag of \(T\), then \(B\) has
    at least two ordinary leaves.
  
\end{enumerate}
\end{lemma}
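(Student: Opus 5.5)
The plan is to derive all three items from the definition of a reduced graph-labelled tree in the sense of \cite{GioanPaul2012}. There, a graph-labelled tree is reduced when every label has at least three vertices, and no tree edge joins two clique bags or joins two star bags in which the marker on one side is the center and the marker on the other side is a leaf. Item~1 is exactly the size requirement, so it is immediate. For items~2 and~3 we combine this size bound with a count of marker versus ordinary vertices in a leaf bag.

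For item~2, let \(B\) be a clique leaf bag. Since \(B\) is a leaf node of the decomposition tree, it is incident to exactly one tree edge. Hence exactly one vertex of its label graph is a marker, namely \(m_B\). By item~1 the label has at least three vertices, so at least two of them are ordinary.

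For item~3, let \(B\) be a center-attached star leaf bag. Again \(m_B\) is its unique marker, and by hypothesis \(m_B\) is the center of the star. The label has at least three vertices, so the star has at least two leaves. None of these leaves is a marker, so all of them are ordinary vertices of \(G\).

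The only delicate point is to make sure that the size condition in item~1 really is part of reducedness in the cited framework, and not just a consequence of the split-decomposition algorithm. If a reference instead defines reducedness purely through the forbidden clique--clique and star--star adjacencies, I would justify the size bound separately. A bag with at most two label vertices is a \(K_1\) or a \(K_2\). A \(K_1\) bag can occur only when \(G\) itself is \(K_1\). A \(K_2\) bag of degree~2 just passes adjacency through, so it can be contracted by merging its two neighbouring tree edges, and the canonical, hence reduced, tree contains no such bag. A \(K_2\) bag of degree~1 would give a leaf bag with no ordinary vertex besides possibly one, which is excluded in the standard setting, where leaves of the tree are exactly the vertices of \(G\). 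Beyond this definitional check there is no real obstacle; the lemma is bookkeeping.
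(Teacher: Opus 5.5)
Your proposal is correct and follows essentially the same route as the paper: item~1 is taken as the standard size condition for reduced graph-labelled trees, and items~2 and~3 follow because a leaf bag has exactly one marker vertex, which in a center-attached star is the center. Your extra paragraph justifying the size bound independently goes beyond the paper, which simply cites the standard framework for item~1.
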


\begin{proof}
Assertion (1) is standard for reduced graph-labelled-tree representations; see
\cite{GioanPaul2012}. Since a leaf bag has exactly one marker vertex,
assertions (2) and (3) follow immediately from (1).  \qedbox
\end{proof}

\begin{lemma}[Reducedness constraints on internal edges, cf.~{\cite[Theorem~7]{BahraniLumbroso2018}}]
\label{lem:reduced-edge-types}
Let \(T\) be a reduced graph-labelled-tree representation of a connected graph.
Then:
\begin{enumerate}
    \item no internal tree edge of \(T\) joins two clique bags; and
    \item no internal tree edge of \(T\) joins a star bag through a leaf-marker
    to a star bag through a center-marker.
\end{enumerate}
Equivalently, \(T\) has no internal edge of type \(KK\) and no internal edge of
type \(S_pS_c\), where \(K\) denotes a clique bag, \(S_p\) a star bag incident
through a leaf-marker, and \(S_c\) a star bag incident through its center-marker.
\end{lemma}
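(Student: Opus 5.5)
The plan is to derive both assertions from the defining property of a reduced graph-labelled tree: no internal tree edge can be contracted, i.e.\ no two adjacent bags can be merged into a single bag of the same degenerate type. I will use the standard characterization of the canonical (reduced) split decomposition \cite{GioanPaul2012}. A graph-labelled tree is reduced exactly when every label has at least three vertices, every label is prime or degenerate, and no two adjacent degenerate bags can be merged into one degenerate bag. The argument therefore analyses what happens when two degenerate bags \(B,B'\) joined by a tree edge \(e\), with markers \(m\in B\) and \(m'\in B'\), are merged. Merging deletes \(m,m'\) and makes every \(x\in V(B)\setminus\{m\}\) adjacent to every \(y\in V(B')\setminus\{m'\}\) exactly when \(x\sim m\) in \(B\) and \(m'\sim y\) in \(B'\). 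This merge is precisely the inverse of splitting along \(e\), and by the accessibility relation it leaves the represented graph \(G\) unchanged.

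For assertion (1), suppose \(B,B'\) are both clique bags. Then every non-marker vertex of \(B\) is adjacent to \(m\), and every non-marker vertex of \(B'\) is adjacent to \(m'\). The merged label is therefore a clique on \(|B|+|B'|-2\) vertices, and this count is at least \(3\) by Lemma~\ref{lem:reduced-basic-facts}(1). Hence \(e\) could be contracted, which contradicts reducedness.

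For assertion (2), let \(B\) be a star bag attached through a leaf \(m\), with center \(c\), and let \(B'\) be a star bag attached through its center \(m'\). In \(B\), the only neighbor of \(m\) is \(c\). In \(B'\), the neighbors of \(m'\) are all the other vertices of \(B'\), namely its leaves. After merging, \(c\) becomes adjacent to every leaf of \(B'\) as well as to the remaining leaves of \(B\). The leaves of \(B\) other than \(m\) stay adjacent only to \(c\), and no other adjacencies arise. The merged label is thus a star with center \(c\), so again \(e\) is contractible, which is a contradiction. The ``equivalently'' clause is just this pair of statements restated in the \(K/S_p/S_c\) notation.

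The main obstacle is pinning down which definition of ``reduced'' is in force. If reducedness is taken as a primitive notion from \cite{GioanPaul2012}, I would simply cite that canonical graph-labelled trees have no \(KK\) edge and no \(S_pS_c\) edge, as in \cite[Theorem~7]{BahraniLumbroso2018}. The merge computation above then serves as a verification that these are exactly the contractible configurations. The only remaining check is that the merge preserves accessibility, which is immediate: a path through \(e\) is accessible iff \(x\sim m\) and \(m'\sim y\), and this is precisely the merged adjacency.
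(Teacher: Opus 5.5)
Your argument is correct, but it takes a somewhat different route from the paper. The paper does no computation. It adopts reducedness in the Gioan--Paul sense, where the absence of $KK$ and $S_pS_c$ edges is literally part of the definition of a reduced graph-labelled tree, so it simply calls the lemma ``precisely the standard reducedness condition'' and cites \cite[Theorem~7]{BahraniLumbroso2018}.

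You instead start from the Cunningham-style formulation: every label is prime or degenerate, and no tree edge can be contracted into a degenerate label. From there you verify the lemma directly, and both of your join computations are right:
\begin{itemize}
\item a $KK$ edge contracts to a clique on $|B|+|B'|-2$ vertices (in fact at least $4$);
\item an $S_pS_c$ edge contracts to a star centred at the centre of the leaf-attached bag.
\end{itemize}
Since you also note that the join preserves accessibility, the lemma follows by contraposition.

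What your route buys is an explanation of why exactly these two edge types are forbidden. It also makes the lemma insensitive to which of the equivalent definitions of ``reduced'' is in force. What it costs is reliance on the fact that your merge-based formulation characterizes the canonical decomposition, which is Cunningham's uniqueness theorem and so no less a citation than the paper's.

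Your computation only shows that $KK$ and $S_pS_c$ edges are contractible, and that direction is all the lemma needs. Checking that the other joins are non-degenerate (centre--centre and leaf--leaf star pairs, clique--star pairs) would matter only if you wanted to show that these two conditions are equivalent to your notion of reducedness.
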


\begin{proof}
This is precisely the standard reducedness condition for reduced
graph-labelled-tree representations; see, for example,
\cite[Theorem~7]{BahraniLumbroso2018}. 
\end{proof}

A connected graph has a \emph{unique prime bag} if its canonical split decomposition
contains exactly one prime bag.

For a fixed split-prime graph $H$, we write $\mathcal{C}(H)$ for the class of connected
graphs whose canonical split decomposition has exactly one prime bag and whose
unique prime bag has label graph isomorphic to $H$. We refer to that bag as the \emph{prime core} of the
graph.

For $m \in \mathbb{N}$, we write $\mathcal{C}_{\le m}$ for the class of connected
graphs whose canonical split decomposition has exactly one prime bag of size at most $m$.
Here the \emph{size} of a bag means the number of vertices in its label graph.

Finally, a connected graph is distance-hereditary if and only if its canonical split
decomposition has no prime bag.

\section{Distance-hereditary graphs are path-extremal}

In this section we verify the path-extremal conjecture for distance-hereditary
graphs. 

We begin with two ingredients from the recent literature: an explicit formula for
the coefficients of the zero forcing polynomial of a path, and a recurrence for
non-forcing sets in the presence of a leaf.

\begin{lemma}[\cite{BoyerEtAl2019}]\label{lem:path-nonforcing}
For every \(n \ge 1\) and every \(0 \le k \le n\),
$z'(P_n;k)=\binom{n-k-1}{k}$.
Equivalently,
$z(P_n;k)=\binom{n}{k}-\binom{n-k-1}{k}$.
\end{lemma}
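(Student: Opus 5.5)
We plan to characterize the non-forcing subsets of $P_n$ directly and then count them with a stars-and-bars argument. Label the path $v_1 v_2 \cdots v_n$. The first step is to show that a set $S \subseteq V(P_n)$ is zero forcing if and only if $S$ contains an endpoint ($v_1$ or $v_n$) or $S$ contains two consecutive vertices $v_i, v_{i+1}$.

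For sufficiency, suppose $v_1 \in S$. Then $v_1$ has at most one neighbor, so it is either done or forces $v_2$. Inductively, the blue prefix $v_1,\dots,v_j$ extends, since $v_j$ has at most one white neighbor, namely $v_{j+1}$. The case $v_n\in S$ is symmetric. If instead $v_i,v_{i+1}\in S$, then $v_i$'s only possible white neighbor is $v_{i-1}$ and $v_{i+1}$'s only possible white neighbor is $v_{i+2}$. So both blue blocks propagate outward, and the closure is all of $V(P_n)$.

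For necessity, suppose $S$ contains no endpoint and no two consecutive vertices. Then every blue vertex is an interior vertex whose two neighbors are both white, so no force is ever possible. Hence $\operatorname{cl}(S)=S\neq V(P_n)$; this needs $n\ge1$ and $S\ne V$, which holds because $S$ omits $v_1$. Equivalently, in fort language, $V\setminus S$ is a fort.

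The remaining step is to count the $k$-subsets of $\{v_2,\dots,v_{n-1}\}$, a path of $n-2$ vertices, that contain no two consecutive vertices. By the standard bijection, subtract $j-1$ from the $j$-th smallest index. The number of $k$-element sets with no two consecutive elements chosen from $m$ consecutive positions is $\binom{m-k+1}{k}$. Taking $m=n-2$ gives $\binom{n-k-1}{k}$, so $z'(P_n;k)=\binom{n-k-1}{k}$, and the formula for $z(P_n;k)$ follows from $z(P_n;k)=\binom nk-z'(P_n;k)$.

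The only real obstacle is the degenerate boundary cases, which must agree with the paper's convention that $\binom ab=0$ when $b<0$ or $b>a$. For $k=0$, the empty set is non-forcing when $n\ge1$, and $\binom{n-1}{0}=1$. For $n=1$ and $k=1$, the single vertex is forcing, and $\binom{-1}{1}=0$ by the convention. For $n=2$, the sets of size $1$ are endpoints and hence forcing, and $\binom{0}{1}=0$. We will check these cases explicitly. Otherwise the argument is routine.
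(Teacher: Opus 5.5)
Your proof is correct and complete. You characterize the non-forcing $k$-sets of $P_n$ as the $k$-subsets of the interior $\{v_2,\dots,v_{n-1}\}$ with no two consecutive vertices (equivalently, sets whose complement is a fort), count them with the shift bijection to get $\binom{(n-2)-k+1}{k}=\binom{n-k-1}{k}$, and your boundary checks correctly match the paper's binomial convention; this is the standard argument, and since the paper gives no proof of this lemma (it simply cites Boyer et al.), there is no different route to compare against.
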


\begin{lemma}[\cite{MenonSingh2025}]\label{lem:leaf-recursion}
Let \(G\) be a graph, and let \(x\) be a leaf of \(G\) with unique neighbor \(v\).
Then, for every \(k \ge 1\),
$z'(G;k)\ge z'(G-x;k)+z'(G-\{x,v\};k-1)$.
\end{lemma}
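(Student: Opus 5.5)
The plan is a direct injection: exhibit two disjoint families of non-forcing \(k\)-subsets of \(G\), one of size \(z'(G-x;k)\) and one of size \(z'(G-\{x,v\};k-1)\). Throughout I will use the standard fact that \(S\) is non-forcing exactly when \(V\setminus\operatorname{cl}(S)\) is a nonempty fort avoiding \(S\). Equivalently, \(S\) is non-forcing as soon as some nonempty fort is disjoint from \(S\).

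\emph{Step 1 (first family).} Let \(\mathcal{A}\) be the set of \(k\)-subsets \(S\subseteq V(G)\setminus\{x\}\) that are non-forcing in \(G-x\). I will show each such \(S\) is non-forcing in \(G\) by proving \(\operatorname{cl}_G(S)\setminus\{x\}\subseteq \operatorname{cl}_{G-x}(S)\). Fix a chronological list of forces in \(G\).
\begin{itemize}
\item Since \(x\notin S\) and \(N(x)=\{v\}\), \(x\) can only turn blue by the force \(v\to x\).
\item Consequently \(x\) never forces \(v\), because \(v\) must already be blue when \(x\) turns blue.
\item After \(v\to x\), the vertex \(v\) has made its one force and forces nothing else.
\item Every other force \(u\to w\) with \(u,w\neq x\) remains valid in \(G-x\). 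Indeed, deleting \(x\) only removes a (white or blue) neighbour of \(v\), and the other vertices have unchanged neighbourhoods.
\end{itemize}
Replaying the forces other than \(v\to x\) in \(G-x\) gives the inclusion. Hence \(|\mathcal{A}|=z'(G-x;k)\), and every member of \(\mathcal{A}\) is non-forcing in \(G\).

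\emph{Step 2 (second family).} Let \(T\) be a \((k-1)\)-subset of \(V(G)\setminus\{x,v\}\) that is non-forcing in \(G-\{x,v\}\). Define \(\varphi(T)=T\cup\{v\}\) if \(T\cup\{v\}\) is a zero forcing set of \(G-x\), and \(\varphi(T)=T\cup\{x\}\) otherwise.

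\emph{Case \(\varphi(T)=T\cup\{x\}\).} In \(G\), the vertex \(x\) immediately forces \(v\). After that, \(v\) has exactly the same white neighbours as it has in \(G-x\) from the start, and \(x\) has no white neighbours. Hence \(\operatorname{cl}_G(T\cup\{x\})=\operatorname{cl}_{G-x}(T\cup\{v\})\cup\{x\}\neq V(G)\).

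\emph{Case \(\varphi(T)=T\cup\{v\}\).} Let \(W\neq\emptyset\) be the complement of \(\operatorname{cl}_{G-\{x,v\}}(T)\); it is a fort of \(G-\{x,v\}\) avoiding \(T\).
\begin{itemize}
\item If \(v\) has no neighbour in \(W\), then \(W\) is already a fort of \(G\), since \(x\) is not adjacent to \(W\).
\item If \(v\) has a neighbour in \(W\), then \(W\cup\{x\}\) is a fort of \(G\). The vertex \(v\) sees \(x\) and at least one vertex of \(W\), and every other outside vertex lies in \(G-\{x,v\}\), where the fort condition already holds.
\end{itemize}
In both subcases the fort avoids \(T\cup\{v\}\), so \(T\cup\{v\}\) is non-forcing in \(G\).

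\emph{Step 3 (disjointness and injectivity).} The map \(\varphi\) is injective, since \(T=\varphi(T)\setminus\{x,v\}\). Its image misses \(\mathcal{A}\). In the first case \(\varphi(T)\) contains \(x\), while no member of \(\mathcal{A}\) does. In the second case \(\varphi(T)\) is forcing in \(G-x\), while every member of \(\mathcal{A}\) is non-forcing there. Summing the sizes of \(\mathcal{A}\) and \(\varphi(\cdot)\) gives the inequality.

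The main obstacle is Step 2. The naive choice \(T\cup\{x\}\) can fail: for \(x\!-\!v\!-\!w\) with \(T=\emptyset\), the set \(\{x\}\) forces everything. The naive choice \(T\cup\{v\}\) can collide with \(\mathcal{A}\). The case split on whether \(T\cup\{v\}\) forces \(G-x\) is exactly what resolves both problems at once.
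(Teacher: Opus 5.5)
Your proof is correct. There is nothing in the paper to compare it against: the paper does not prove this lemma but imports it from Menon and Singh. Your injection therefore makes this step of the paper self-contained.

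Each piece holds up:
\begin{itemize}
\item In Step~1 the replay is valid, because $x$ can only be forced by $v$ and never forces anything itself. Deleting $x$ can only shrink a vertex's set of white neighbours, so every other force survives in $G-x$.
\item In the case $\varphi(T)=T\cup\{x\}$, the identity $\operatorname{cl}_G(T\cup\{x\})=\operatorname{cl}_{G-x}(T\cup\{v\})\cup\{x\}$ holds. After $x\to v$, the vertex $x$ is only a blue neighbour of $v$, and blue neighbours never affect the colour-change rule.
\item In the case $\varphi(T)=T\cup\{v\}$, the sets $W$ and $W\cup\{x\}$ really are nonempty forts of $G$ avoiding $T\cup\{v\}$.
\end{itemize}

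That last fort argument never uses the hypothesis that $T\cup\{v\}$ forces $G-x$. It shows that $T\cup\{v\}$ is non-forcing in $G$ for every admissible $T$. So, as you note, the case split in $\varphi$ is needed only for disjointness from $\mathcal{A}$. Switching to $T\cup\{x\}$ is exactly what turns each potential collision into a new non-forcing set containing $x$.

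If you want the whole proof in fort language, Step~1 can be phrased the same way. A nonempty fort $F$ of $G-x$ avoiding $S$ is already a fort of $G$ when $v\notin F$, and $F\cup\{x\}$ is one when $v\in F$. The boundary values of $k$ need no separate treatment, since the corresponding families are empty, consistent with the paper's convention.
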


We also use the standard hereditary property of distance-hereditary graphs.

\begin{lemma}[\cite{BandeltMulder1986}]\label{lem:dh-hereditary}
Every induced subgraph of a distance-hereditary graph is distance-hereditary.
\end{lemma}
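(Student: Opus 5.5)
The plan is a direct induction that deletes one vertex at a time. First, a caveat about the statement. Under the definition used here, a distance-hereditary graph is built from \(K_1\) by pendant and twin additions, so it is always connected. A disconnected induced subgraph (for instance \(P_3\) minus its middle vertex) therefore cannot literally be distance-hereditary. I will read the lemma as saying that every \emph{connected} induced subgraph is distance-hereditary. Equivalently, every connected component of an induced subgraph is distance-hereditary, and that is all later arguments should need.

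\emph{Step 1 (reduction to single deletions).} Let \(H=G[X]\) be connected with \(G\) connected. I claim the vertices of \(V(G)\setminus X\) can be removed one at a time so that every intermediate graph stays connected. At each stage, take a vertex \(w\notin X\) at maximum distance from \(X\) in the current graph. Every other vertex has a shortest path to \(X\), and such a path never passes through \(w\), because any vertex on it is strictly closer to \(X\) than its starting point. Hence removing \(w\) preserves connectivity. So it suffices to prove the following: if \(G\) is distance-hereditary, \(w\in V(G)\), and \(G-w\) is connected, then \(G-w\) is distance-hereditary.

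\emph{Step 2 (induction on \(|V(G)|\)).} Write \(G\) as obtained from a distance-hereditary \(G_0\) by adding \(u\) via a pendant, false-twin or true-twin addition at some \(v\in V(G_0)\).
\begin{itemize}
\item If \(w=u\), then \(G-w=G_0\) and we are done.
\item If \(w=v\) and \(u\) is a twin of \(v\), then \(G-v\cong G_0\) via the map sending \(u\) to \(v\), since \(u\) and \(v\) have the same neighbours outside \(\{u,v\}\).
\item If \(w=v\) and \(u\) is pendant at \(v\), then \(u\) is isolated in \(G-v\). So either \(G-v=K_1\), which is distance-hereditary, or \(G-v\) is disconnected and the case is excluded.
\item If \(w\notin\{u,v\}\), then \(G-w\) is obtained from \(G_0-w\) by the same operation at \(v\).
\end{itemize}

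\emph{Step 3 (the last case).} It remains to check that \(G_0-w\) is connected, so that induction applies; this is the only delicate point. The graph \(G_0-w\) equals \((G-w)-u\), and \(u\) is a leaf or a twin of \(v\) inside the connected graph \(G-w\). Deleting a leaf cannot disconnect a graph. Deleting a twin \(u\) of \(v\) also cannot: any path through \(u\) can be rerouted through \(v\), because \(N(u)\setminus\{v\}=N(v)\setminus\{u\}\). By induction \(G_0-w\) is distance-hereditary, and applying the same addition at \(v\) shows that \(G-w\) is too.

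The main obstacle is Step 1 together with the connectivity bookkeeping in Step 3. The operations themselves commute with deletion trivially; the care needed is in keeping every intermediate graph inside the connected class to which the constructive definition applies.
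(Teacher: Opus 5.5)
There is a genuine gap. Your argument is sound for what it proves: the farthest-from-\(X\) deletion order does preserve connectivity, and rerouting a path through the twin \(v\) is correct. But the opening caveat it rests on is false, so you prove a strictly weaker statement than the lemma.

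Under the paper's definition, a false-twin addition at an isolated vertex \(v\) creates a vertex \(u\) with \(N(u)=N(v)=\emptyset\). A single false-twin addition to \(K_1\) already gives \(\overline{K_2}\), which is exactly your example of \(P_3\) minus its middle vertex. So the constructive class contains disconnected graphs. The lemma asserts, correctly, that \emph{every} induced subgraph of \emph{every} such graph, connected or not, is again in the class.

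Nor is the connected version all that is needed later. In the proof of Theorem~\ref{thm:dh-path-extremal}, the lemma is applied to \(G-\{x,v\}\), which is disconnected whenever \(v\) is a cut vertex of \(G-x\). For instance, the spider with legs of lengths \(1,2,2\) has no twin pair, and taking \(x\) to be the leg of length \(1\) gives \(G-\{x,v\}\cong 2P_2\). The induction hypothesis is then invoked on that disconnected graph itself. Knowing only that its components are distance-hereditary would require a further argument that path-extremality survives disjoint unions, which you do not supply.

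The repair is easy and actually shortens your proof: drop the connectivity bookkeeping. Your Step~2 induction works for arbitrary \(G\) and \(w\).
\begin{itemize}
\item The cases \(w=u\), \(w=v\) with \(u\) a twin of \(v\), and \(w\notin\{u,v\}\) go through verbatim, and the last no longer needs any check on \(G_0-w\).
\item The only case that changes is \(w=v\) with \(u\) pendant. Then \(G-v\) is \(G_0-v\) (distance-hereditary by induction, or empty if \(G_0=K_1\)) plus the isolated vertex \(u\).
\item You therefore need that adding an isolated vertex preserves the class. To see this, prepend a false-twin addition at the initial \(K_1\) to a construction sequence of \(G_0-v\). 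The resulting isolated vertex lies in no neighbourhood, so no later pendant or twin addition touches it.
\end{itemize}
Step~1 then becomes deletion in arbitrary order, and Step~3 disappears.

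For comparison, the paper gives no proof and simply cites Bandelt and Mulder. There, heredity is immediate from the metric definition, and the real content is the equivalence with the constructive characterization. A direct induction on the construction sequence, as you propose, avoids needing that equivalence once the connectivity restriction is removed.
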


The next observation records the basic fort obstruction that will be used
throughout the proof.

\begin{lemma}\label{lem:fort-obstruction}
Let \(F \subseteq V(G)\) be a fort. If \(S \subseteq V(G)\setminus F\), then
\(S\) is not a zero forcing set of \(G\).
\end{lemma}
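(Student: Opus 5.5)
We argue by showing that the zero forcing process started from \(S\) never colors any vertex of \(F\). Since \(S\cap F=\emptyset\), every vertex of \(F\) is white at the start. We prove by induction on the number of forces performed that, at every stage of the process, the set of blue vertices is disjoint from \(F\). The closure \(\operatorname{cl}_G(S)\) is obtained by finitely many forces, so this invariant gives \(\operatorname{cl}_G(S)\cap F=\emptyset\).

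For the inductive step, suppose that at some stage the current blue set \(B\) satisfies \(B\cap F=\emptyset\), and that a blue vertex \(u\) forces a white vertex \(w\). Suppose for contradiction that \(w\in F\). Since \(u\) is blue, \(u\notin F\). Also \(u\) has at least one neighbor in \(F\), namely \(w\). By the fort condition, \(u\) therefore has at least two neighbors in \(F\), say \(w\) and \(w'\ne w\). By the inductive hypothesis every vertex of \(F\) is white, so \(w'\) is white as well. Hence \(u\) has at least two white neighbors, contradicting the color-change rule, which requires \(u\) to have exactly one white neighbor. Thus \(w\notin F\), and the invariant is preserved.

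It remains to conclude that \(S\) is not zero forcing. Forts are nonempty by the standard convention; with an empty \(F\) the statement would fail, since then every set \(S\) avoids \(F\). So there is a vertex \(f\in F\). By the invariant, \(f\notin\operatorname{cl}_G(S)\), hence \(\operatorname{cl}_G(S)\neq V(G)\) and \(S\) is not a zero forcing set of \(G\). The only real subtlety is this nonemptiness convention, which should be made explicit in the definition of a fort.
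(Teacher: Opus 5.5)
Your proof is correct and is essentially the paper's argument: the paper considers the first force that would enter \(F\) and derives the same contradiction (the forcing vertex lies outside \(F\), has a neighbor in \(F\) and hence at least two, all still white), which is your invariant phrased as a first-failure argument. Your remark about nonemptiness is also right: under the paper's definition as written, \(\emptyset\) is a fort, and the paper's concluding step tacitly uses \(F\neq\emptyset\). This is harmless because the lemma is only ever applied to \(F=\{u,v\}\).
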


\begin{proof}
Suppose that \(S\) is a zero forcing set. Consider the zero forcing process
started from \(S\), and let \(w\notin F\) be the first blue vertex that forces a
vertex of \(F\). Immediately before this force occurs, every vertex of \(F\) is
still white. Since \(F\) is a fort, the vertex \(w\) has either \(0\) or at
least \(2\) neighbors in \(F\). The first case is impossible because \(w\) is
forcing a vertex of \(F\). In the second case, \(w\) has at least two white
neighbors, contradicting the color-change rule. Hence no force can ever enter
\(F\), and \(S\) is not a zero forcing set. \qedbox
\end{proof}

\begin{lemma}\label{lem:twin-pair-fort}
If \(u\) and \(v\) form a twin pair in \(G\), then \(\{u,v\}\) is a fort.
\end{lemma}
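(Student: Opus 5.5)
The plan is to verify the fort condition directly from the definitions, with \(F=\{u,v\}\). We must show that every vertex \(w\in V(G)\setminus\{u,v\}\) has either \(0\) or at least \(2\) neighbors in \(F\). Equivalently, no outside vertex is adjacent to exactly one of \(u\) and \(v\).

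Fix \(w\notin\{u,v\}\). We treat the two kinds of twin pair separately, although both reduce to the same observation.

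\emph{False twins.} Here \(N(u)=N(v)\), so \(w\sim u\) holds if and only if \(w\sim v\).

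\emph{True twins.} Here \(N(u)\setminus\{v\}=N(v)\setminus\{u\}\). Since \(w\neq u,v\), membership of \(w\) in \(N(u)\) is the same as membership in \(N(u)\setminus\{v\}\), and similarly for \(v\). Hence again \(w\sim u\) holds if and only if \(w\sim v\).

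In either case \(|N(w)\cap F|\in\{0,2\}\), so \(F\) satisfies the fort condition. The edge between \(u\) and \(v\), when present, plays no role, because the fort condition only concerns vertices outside \(F\).

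There is no real obstacle in this argument. The only point requiring care is the true-twin case, where the neighborhoods agree only after \(u\) and \(v\) are removed from each other's neighborhood. This discrepancy is harmless because \(w\) is distinct from both \(u\) and \(v\).
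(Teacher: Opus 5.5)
Your proof is correct and follows the paper's own argument: both fix \(w\notin\{u,v\}\), observe that \(w\sim u \iff w\sim v\), and conclude that \(w\) has \(0\) or \(2\) neighbors in \(\{u,v\}\). Your separate treatment of true and false twins spells out the step that the paper compresses into ``since \(u\) and \(v\) are twins.''
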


\begin{proof}
Let \(w\in V(G)\setminus\{u,v\}\). Since \(u\) and \(v\) are twins, we have
$w \sim u \iff w \sim v$.
Thus \(w\) has either \(0\) or \(2\) neighbors in \(\{u,v\}\), and so
\(\{u,v\}\) is a fort. \qedbox
\end{proof}

\begin{corollary}\label{cor:twin-path-extremal}
If \(G\) contains a twin pair, then \(G\) is path-extremal.
\end{corollary}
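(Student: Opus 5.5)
The plan is to compare non-forcing counts directly. By the equivalent formulation of path-extremality, together with Lemma~\ref{lem:path-nonforcing}, it suffices to show that
\[
z'(G;k)\ \ge\ \binom{n-k-1}{k}\qquad\text{for all }0\le k\le n .
\]
Let \(\{u,v\}\) be the twin pair in \(G\). By Lemma~\ref{lem:twin-pair-fort} the set \(\{u,v\}\) is a fort. Lemma~\ref{lem:fort-obstruction} then says that every \(S\subseteq V(G)\setminus\{u,v\}\) is non-forcing. Counting the \(k\)-subsets of the remaining \(n-2\) vertices gives
\[
z'(G;k)\ \ge\ \binom{n-2}{k}\qquad\text{for every }k .
\]

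The main step is the inequality \(\binom{n-2}{k}\ge\binom{n-k-1}{k}\). Note that \(n\ge 2\), since a twin pair exists.
\begin{itemize}
\item For \(k\ge 1\) we have \(n-2\ge n-k-1\). Since \(\binom{a}{k}\) is nondecreasing in \(a\) for \(a\ge 0\), the inequality holds whenever \(n-k-1\ge 0\).
\item If \(n-k-1<0\), the right-hand side is \(0\) by the stated convention, so the inequality is trivial.
\item For \(k=0\), both sides equal \(1\): we have \(\binom{n-2}{0}=\binom{n-1}{0}=1\). Directly, the empty set is non-forcing because \(n\ge 1\).
\end{itemize}

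The only step I expect to need care is the boundary range where \(\binom{n-2}{k}=0\), namely \(k\ge n-1\). There I will check that the path count also vanishes. From \(k\ge n-1\) we get \(n-k-1\le 0<k\), using \(k\ge n-1\ge 1\), so \(\binom{n-k-1}{k}=0\). With this the inequality holds for all \(0\le k\le n\), and hence \(z(G;k)\le z(P_n;k)\) for every \(k\).
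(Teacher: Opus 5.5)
Your proposal is correct and matches the paper's proof: the twin pair is a fort, so every \(k\)-subset avoiding it is non-forcing, giving \(z'(G;k)\ge\binom{n-2}{k}\ge\binom{n-k-1}{k}=z'(P_n;k)\) since \(n-2\ge n-k-1\) for \(k\ge 1\). Your explicit treatment of the boundary cases (negative upper index, and \(k\ge n-1\)) spells out what the paper leaves to the binomial convention.
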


\begin{proof}
Let \(\{u,v\}\) be a twin pair in \(G\). By Lemma~\ref{lem:twin-pair-fort},
the set \(\{u,v\}\) is a fort. Hence, by Lemma~\ref{lem:fort-obstruction},
every \(k\)-subset of \(V(G)\setminus\{u,v\}\) is non-forcing. Therefore,
if \(n=|V(G)|\), then
$z'(G;k)\ge \binom{n-2}{k}$
for every \(0\le k\le n\). For \(k=0\), this already gives
\(z'(G;0)\ge z'(P_n;0)\). For \(k\ge 1\), we have \(n-2\ge n-k-1\), and hence
$\binom{n-2}{k}\ge \binom{n-k-1}{k}=z'(P_n;k)$
by Lemma~\ref{lem:path-nonforcing}. Thus
$z'(G;k)\ge z'(P_n;k)$
for all  $0\le k\le n$.
Equivalently, \(G\) is path-extremal. \qedbox
\end{proof}

The next structural lemma is the only place where we use the recursive
construction of distance-hereditary graphs.

\begin{lemma}\label{lem:dh-leaf-or-twin}
Let \(G\) be a distance-hereditary graph with at least two vertices.
Then \(G\) contains either a leaf or a twin pair.
\end{lemma}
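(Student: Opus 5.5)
The plan is to argue directly from the definition of distance-hereditary graphs as those obtained from \(K_1\) by a sequence of pendant, false-twin and true-twin additions. Since \(G\) has at least two vertices, fix such a construction sequence. It has at least one step. Let \(u\) denote the vertex added in the last step, and let \(v\) be the vertex at which it was added. Write \(G_0\) for the graph just before this step, so that \(G=G_0+u\) and \(v\in V(G_0)\).

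The key point is that the last-added vertex cannot be disturbed by later operations, because there are none. We split into three cases according to the type of the final operation.

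\emph{Pendant addition.} Then \(N_G(u)=\{v\}\), so \(u\) is a leaf of \(G\).

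\emph{False-twin addition.} Then \(u\nsim v\) and \(N_G(u)=N_{G_0}(v)\). No edges incident to \(v\) other than those to \(u\) change, and \(u\nsim v\), so \(N_G(v)=N_{G_0}(v)\). Hence \(N_G(u)=N_G(v)\), and \(\{u,v\}\) is a false-twin pair.

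\emph{True-twin addition.} Then \(u\sim v\) and \(N_G(u)\setminus\{v\}=N_{G_0}(v)\). Also \(N_G(v)=N_{G_0}(v)\cup\{u\}\), so \(N_G(v)\setminus\{u\}=N_{G_0}(v)=N_G(u)\setminus\{v\}\). Hence \(\{u,v\}\) is a true-twin pair.

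There is no real obstacle here. The only step needing care is verifying that the neighborhood of \(v\) in \(G\) is exactly its neighborhood in \(G_0\), modified only by \(u\). This holds because a single vertex addition alters adjacency only between the new vertex and existing vertices. One should also note that the definitions of twin pairs require distinct vertices, which holds since \(u\notin V(G_0)\).
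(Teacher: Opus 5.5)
Your proposal is correct and follows essentially the same approach as the paper: take the last operation in a construction sequence from \(K_1\), which yields a leaf in the pendant case and a twin pair in the two twin-addition cases. Your explicit check that \(N_G(v)\) differs from \(N_{G_0}(v)\) only in whether it contains \(u\) fills in a detail the paper leaves implicit.
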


\begin{proof}
By definition, \(G\) can be obtained from \(K_1\) by a sequence of pendant
additions, false-twin additions, and true-twin additions. Consider the last
operation in such a construction. If the last operation is a pendant addition,
then the newly added vertex is a leaf of \(G\). If the last operation is a
false-twin addition or a true-twin addition, then the newly added vertex forms
a twin pair with the vertex to which it was added. In either case, \(G\)
contains a leaf or a twin pair. \qedbox
\end{proof}

We now prove the main result of this section.

\begin{theorem}\label{thm:dh-path-extremal}
Every distance-hereditary graph is path-extremal. Equivalently, if \(G\) is a
distance-hereditary graph on \(n\) vertices, then $\mathcal{Z}(G;x)\preceq \mathcal{Z}(P_n;x)$.
\end{theorem}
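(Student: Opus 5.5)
We argue by strong induction on \(n=|V(G)|\), proving the equivalent statement \(z'(G;k)\ge z'(P_n;k)=\binom{n-k-1}{k}\) for all \(0\le k\le n\) (Lemma~\ref{lem:path-nonforcing}). For \(n=1\) the graph is \(P_1\) and there is nothing to prove. For \(n\ge 2\), Lemma~\ref{lem:dh-leaf-or-twin} splits the argument into two cases. If \(G\) has a twin pair, Corollary~\ref{cor:twin-path-extremal} finishes immediately. So the real work is the case where \(G\) has a leaf \(x\) with unique neighbor \(v\).

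In the leaf case, Lemma~\ref{lem:dh-hereditary} shows that \(G-x\) and \(G-\{x,v\}\) are distance-hereditary graphs on \(n-1\) and \(n-2\) vertices. By induction they are path-extremal. For \(k\ge1\), Lemma~\ref{lem:leaf-recursion} then gives
\[
z'(G;k)\ge \binom{n-k-2}{k}+\binom{n-k-2}{k-1}=\binom{n-k-1}{k},
\]
by Pascal's rule, which is exactly \(z'(P_n;k)\). For \(k=0\) the empty set is non-forcing for every graph with \(n\ge1\) vertices, so \(z'(G;0)=1\ge z'(P_n;0)\).

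The main obstacle is checking that the boundary conventions behave in the Pascal step. Three situations need care.
\begin{itemize}
\item When \(n=2\), the graph \(G-\{x,v\}\) is empty. Induction must then be read with the convention \(z'(\emptyset;0)=1\) (the empty set is trivially forcing, so in fact \(z'=0\)), or this case must be handled directly.
\item When \(k-1\) or \(k\) falls outside the range \([0,n-2]\), the binomial coefficients must vanish consistently with the conventions \(z'=0\) and \(\binom{a}{b}=0\).
\item When \(n-k-2<0\), the right-hand side \(\binom{n-k-1}{k}\) must also be \(0\).
\end{itemize}
I would first settle \(n\le 2\) by hand: \(G\) is \(K_1\) or \(K_2=P_2\). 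I would then assume \(n\ge3\), so that both smaller graphs are nonempty and the inductive hypothesis applies verbatim for all \(k\). Values of \(k\) with \(2k>n-1\) need no argument, since there the target \(\binom{n-k-1}{k}\) is \(0\).

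Finally, since path-extremality is stated for all \(0\le k\le n\), I would note that the two cases exhaust all possibilities. Combining them completes the induction, and the polynomial form \(\mathcal{Z}(G;x)\preceq\mathcal{Z}(P_n;x)\) is just a restatement of the coefficientwise inequalities.
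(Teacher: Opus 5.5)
Your proposal is correct and follows the paper's proof essentially verbatim: induction on \(n\) with \(n\le 2\) settled directly, the twin/leaf dichotomy of Lemma~\ref{lem:dh-leaf-or-twin}, Corollary~\ref{cor:twin-path-extremal} in the twin case, and heredity plus Lemma~\ref{lem:leaf-recursion} and Pascal's identity in the leaf case. Your boundary bookkeeping is harmless, with two small corrections. First, the empty graph genuinely has \(z'(\emptyset;0)=0\), which already matches \(\binom{-1}{0}=0\) under the paper's binomial convention, so no special convention is needed. Second, under the paper's literal definition \(2K_1\) (a false-twin addition to \(K_1\)) is also a two-vertex distance-hereditary graph besides \(K_2\), but it is covered at once by the twin-pair corollary.
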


\begin{proof}
We argue by induction on \(n=|V(G)|\). The statement is immediate for \(n\le 2\).

Now let \(n\ge 3\), and let \(G\) be a distance-hereditary graph on \(n\)
vertices. By Lemma~\ref{lem:dh-leaf-or-twin}, the graph \(G\) contains either a
twin pair or a leaf.

If \(G\) contains a twin pair, then the result follows immediately from
Corollary~\ref{cor:twin-path-extremal}.

Assume instead that \(G\) contains a leaf \(x\), and let \(v\) be the unique
neighbor of \(x\). By Lemma~\ref{lem:dh-hereditary}, both \(G-x\) and
\(G-\{x,v\}\) are distance-hereditary. Since these graphs have fewer than \(n\)
vertices, the induction hypothesis gives
$z'(G-x;k)\ge z'(P_{n-1};k)=\binom{n-k-2}{k}$.
for all \(0\le k\le n-1\), and $z'(G-\{x,v\};k-1)\ge z'(P_{n-2};k-1)=\binom{n-k-2}{k-1}$
for all \(1\le k\le n\), where we used
Lemma~\ref{lem:path-nonforcing} in both equalities.

Applying Lemma~\ref{lem:leaf-recursion}, we obtain, for every \(1\le k\le n\),
$z'(G;k)
   \ge z'(G-x;k)+z'(G-\{x,v\};k-1)
   \ge \binom{n-k-2}{k}+\binom{n-k-2}{k-1}$.
By Pascal's identity,
$\binom{n-k-2}{k}+\binom{n-k-2}{k-1}
   = \binom{n-k-1}{k}
   = z'(P_n;k)$,
again by Lemma~\ref{lem:path-nonforcing}. Thus $z'(G;k)\ge z'(P_n;k)$ for all  $1\le k\le n$.
The case \(k=0\) is trivial, so \(G\) is path-extremal. \qedbox
\end{proof}

\begin{corollary}
Every tree and every cograph is path-extremal.
\end{corollary}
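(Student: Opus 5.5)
The corollary will follow by showing that trees and cographs are distance-hereditary and then invoking Theorem~\ref{thm:dh-path-extremal}. In each case we check membership directly from the one-vertex extension definition (pendant, false-twin and true-twin additions starting from \(K_1\)), so no split-decomposition facts are needed.

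\emph{Trees.} We use induction on the number of vertices. \(K_1\) is trivially obtained. A tree \(T\) with at least two vertices has a leaf \(x\) with neighbor \(v\). The graph \(T-x\) is a tree, so by induction it arises from \(K_1\) by a sequence of allowed operations. Appending a pendant addition of \(x\) at \(v\) produces \(T\), so \(T\) is distance-hereditary.

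\emph{Cographs.} We use the standard recursive description: a cograph is \(K_1\), a disjoint union of two cographs, or a join of two cographs. Equivalently, every cograph on at least two vertices contains a twin pair, and deleting a vertex leaves a cograph. I will use the second form, which is the classical fact that cographs (\(P_4\)-free graphs) always have twins. The induction on \(n\) then runs as follows. Pick a twin pair \(\{u,v\}\) in the cograph \(G\). The graph \(G-u\) is an induced subgraph, hence a cograph, and so is distance-hereditary by induction. Then \(G\) is recovered from \(G-u\) by a false-twin or true-twin addition of \(u\) at \(v\). Hence \(G\) is distance-hereditary.

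There is one subtlety: a disconnected cograph is distance-hereditary as defined here only if the defining sequence can produce disconnected graphs. False-twin addition at a vertex of \(K_1\) does produce \(2K_1\), since the new vertex has empty neighborhood. More generally, the twin argument above handles all cographs, connected or not, so no special case is needed. I expect this twin-existence fact for cographs to be the only nontrivial input. I would cite it as standard rather than reprove it; if a proof is wanted, it follows from the fact that the cotree has an internal node with two leaf children. With both classes shown to be distance-hereditary, Theorem~\ref{thm:dh-path-extremal} gives the corollary.
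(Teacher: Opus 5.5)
Your proof is correct and follows the paper's approach: the paper's proof just notes that both classes are distance-hereditary and then applies Theorem~\ref{thm:dh-path-extremal}. You additionally verify that membership from the paper's own pendant/twin-addition definition, by deleting a leaf for trees and deleting one vertex of a twin pair for cographs; this fills in details the paper leaves implicit without changing the argument.
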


\begin{proof}
Both classes are distance-hereditary. \qedbox
\end{proof}

\begin{remark}
The proof of Theorem~\ref{thm:dh-path-extremal} isolates two mechanisms that
reappear in more general split-decomposition arguments: a small fort arising
from a twin pair, and a recurrence for non-forcing sets arising from a leaf.
This modularity is what makes distance-hereditary graphs a natural first test
case for the path-extremal conjecture.
\end{remark}

\section{Graphs with a unique prime bag}

We now extend Theorem~\ref{thm:dh-path-extremal} by allowing a single prime bag
in the split decomposition. For the inductive proof it is convenient to enlarge
the class \(\mathcal{C}(H)\) slightly.

For a fixed split-prime graph \(H\), let \(\mathcal{D}(H)\) denote the class of
connected graphs that admit a reduced graph-labelled-tree representation with at
most one prime bag, and such that the label graph of the prime bag, if it
exists, is isomorphic to an induced subgraph of \(H\). In particular,
$\mathcal{C}(H)\subseteq \mathcal{D}(H)$.
Moreover, if a graph in \(\mathcal{D}(H)\) has no prime bag, then it is totally
decomposable and hence distance-hereditary \cite{GioanPaul2012}.

We begin with the basic local consequences of the accessibility relation.

\begin{lemma}\label{lem:leaf-bag-geometry}
Let \(T\) be a reduced graph-labelled-tree representation of a connected graph
\(G\), and let \(B\) be a non-prime leaf bag of \(T\).
\begin{enumerate}
    \item If \(B\) is a clique bag, then \(G\) contains a twin pair.
    \item If \(B\) is a center-attached star bag, then \(G\) contains a twin pair.
    \item If \(B\) is a leaf-attached star bag with center \(c\), then every
    ordinary leaf of \(B\) is a leaf of \(G\) adjacent to \(c\).
\end{enumerate}
\end{lemma}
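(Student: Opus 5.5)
The plan is to read off, directly from the accessibility relation, the neighborhoods in \(G\) of the ordinary vertices of the leaf bag \(B\). Since \(B\) is a leaf bag, it has exactly one marker \(m_B\), and every vertex of \(G\) outside \(B\) is reached through the tree edge at \(m_B\). So for an ordinary vertex \(u\) of \(B\), the definition of accessibility splits \(N_G(u)\) into two parts. The first is the set of ordinary vertices of \(B\) adjacent to \(u\) in the label of \(B\). The second is, if \(u\sim m_B\) in the label, the fixed set \(A\subseteq V(G)\setminus B\) of vertices accessible from \(m_B\), and otherwise the empty set. The key point is that \(A\) does not depend on \(u\): it is determined by the paths from \(m_B\) into the rest of the tree.

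For (1), by Lemma~\ref{lem:reduced-basic-facts}(2) the clique bag \(B\) has two ordinary vertices \(u,v\). In the label both are adjacent to each other, to \(m_B\), and to all other ordinary vertices of \(B\). The description above then gives \(u\sim v\) and \(N_G(u)\setminus\{v\}=N_G(v)\setminus\{u\}\), since both sets equal the other ordinary vertices of \(B\) together with \(A\). So \(u,v\) are true twins.

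For (2), Lemma~\ref{lem:reduced-basic-facts}(3) gives two ordinary leaves \(u,v\) of the star. Because \(m_B\) is the center, the only label-neighbor of each of \(u,v\) is \(m_B\). Hence \(N_G(u)=N_G(v)=A\) and \(u\nsim v\), so \(u,v\) are false twins.

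For (3), the center \(c\) is an ordinary vertex of \(B\) and \(m_B\) is a leaf of the star. An ordinary leaf \(u\) of \(B\) is adjacent in the label only to \(c\), and in particular not to \(m_B\), so \(N_G(u)=\{c\}\). Thus \(u\) is a leaf of \(G\) whose neighbor is \(c\).

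The only point needing care is checking that the neighbor set \(A\) is nonempty where this matters, i.e.\ that connectivity prevents degenerate cases. This does not actually matter for (1) and (2), since the twin conditions hold regardless of whether \(A\) is empty. It also does not matter for (3), which only needs the label-adjacency of \(u\) to \(c\). I therefore expect no real obstacle. The main work is stating cleanly the ``leaf bag, single marker'' form of the accessibility relation.
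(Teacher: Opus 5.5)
Your proposal is correct and follows essentially the same route as the paper: it uses Lemma~\ref{lem:reduced-basic-facts} to get two ordinary vertices (resp.\ two ordinary leaves), then reads off from the accessibility relation that ordinary vertices adjacent to the single marker \(m_B\) have the same neighborhood outside \(B\), and that an ordinary leaf of a leaf-attached star is not adjacent to \(m_B\), so \(c\) is its only neighbor. Your decomposition of \(N_G(u)\) using the fixed set \(A\) of vertices accessible from \(m_B\) simply makes precise the paper's phrase that these vertices ``see the rest of the graph through the same marker.''
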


\begin{proof}
Let \(m_B\) be the marker of \(B\).

Assume first that \(B\) is a clique bag. By Lemma~\ref{lem:reduced-basic-facts}, the bag \(B\) has at least two ordinary vertices. Any ordinary vertex of \(B\) is adjacent to
\(m_B\) inside the clique, so every ordinary vertex of \(B\) sees the rest of
the graph through the same marker and therefore has the same neighborhood
outside \(B\). Since the ordinary vertices of \(B\) are pairwise adjacent inside
the bag, any two of them form a pair of true twins in \(G\).

Next assume that \(B\) is a center-attached star bag. Again, Lemma~\ref{lem:reduced-basic-facts} implies that \(B\) has at least two ordinary leaves. Each ordinary leaf is adjacent to
the marker \(m_B\), which is now the center of the star, and no two ordinary
leaves are adjacent to one another. Thus any two ordinary leaves have the same
neighborhood outside \(B\) and are nonadjacent, so they form a pair of false
twins in \(G\).

Finally, assume that \(B\) is a leaf-attached star bag with center \(c\). Let
\(x\) be an ordinary leaf of \(B\). Inside \(B\), the vertex \(x\) is adjacent
only to \(c\), and it is not adjacent to the marker \(m_B\). Consequently, \(x\)
has no neighbor outside \(B\). Therefore \(x\) has degree \(1\) in \(G\), and
its unique neighbor is \(c\). \qedbox
\end{proof}

\begin{corollary}\label{cor:multi-leaf-star-twins}
Let \(T\) be a reduced graph-labelled-tree representation of a connected graph
\(G\), and let \(B\) be a leaf-attached star bag of \(T\). If \(B\) has at
least two ordinary leaves, then \(G\) contains a twin pair.
\end{corollary}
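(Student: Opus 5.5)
By Lemma~\ref{lem:leaf-bag-geometry}(3), any two distinct ordinary leaves of \(B\) are leaves of \(G\) with the same unique neighbor. This already gives a pair of false twins, so the proof will be a short neighborhood comparison.

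Let \(c\) be the center of the star \(B\). By hypothesis, \(B\) has at least two ordinary leaves; choose distinct ones \(x\) and \(y\). Lemma~\ref{lem:leaf-bag-geometry}(3) says each of them is a leaf of \(G\) adjacent to \(c\), so \(N_G(x)=\{c\}=N_G(y)\). We should also rule out that \(c\) is the marker \(m_B\). This holds because \(B\) is leaf-attached, so \(m_B\) is a leaf of the star and not its center. Hence \(c\) is an ordinary vertex of \(B\), that is, an actual vertex of \(G\). This is also already implicit in part (3) of the lemma.

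It remains to check that \(x\nsim y\) in \(G\). Both are ordinary vertices of the same bag \(B\), so their adjacency in \(G\) is their adjacency in the label of \(B\). Two leaves of a star are nonadjacent, so \(x\nsim y\). Equivalently, each of \(x\) and \(y\) has degree \(1\) in \(G\) with neighbor \(c\ne y\), so neither can be adjacent to the other. Thus \(x\) and \(y\) are false twins, and \(\{x,y\}\) is a twin pair in \(G\).

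There is no real obstacle in this argument. The only point needing care is that \(c\) is an ordinary vertex rather than the marker, and that follows directly from the definition of a leaf-attached star bag.
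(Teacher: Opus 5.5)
Your proof is correct and follows the paper's argument: both take the center \(c\), apply Lemma~\ref{lem:leaf-bag-geometry}(3) to see that every ordinary leaf of \(B\) is a leaf of \(G\) whose unique neighbor is \(c\), and conclude that any two such leaves are nonadjacent with the same neighborhood, hence false twins. Your added check that \(c\) is an ordinary vertex rather than the marker is a harmless bit of extra care that the paper leaves implicit.
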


\begin{proof}
Let \(c\) be the center of \(B\). By Lemma~\ref{lem:leaf-bag-geometry}, every
ordinary leaf of \(B\) is a leaf of \(G\) adjacent to \(c\). Hence any two
ordinary leaves of \(B\) are nonadjacent and have the same unique neighbor
\(c\), so they are false twins. \qedbox
\end{proof}

The next lemma identifies exactly when one can peel a non-prime leaf bag away
from the prime bag.

\begin{lemma}\label{lem:leaf-far-from-prime}
Let \(T\) be a reduced graph-labelled-tree representation of a connected graph
\(G\), and suppose that \(T\) has a unique prime bag \(P\). If \(T\) is not a
star centered at \(P\), then there exists a non-prime leaf bag \(B\) whose
unique neighbor bag is also non-prime.
\end{lemma}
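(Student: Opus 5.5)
This is a purely tree-theoretic statement, so the plan is to work in the decomposition tree \(T\) and forget the label graphs. A prime bag may itself be a leaf of \(T\), and the only non-prime bags are clique and star bags. The idea is to root \(T\) at \(P\) and take a bag as far from \(P\) as possible.

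First, dispose of the degenerate case where \(T\) has only one bag. Since \(P\) is the unique prime bag, that bag is \(P\), and a single-node tree counts (trivially) as a star centered at \(P\). So we may assume \(T\) has at least two nodes.

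Root \(T\) at \(P\) and choose a bag \(B\) at maximum distance \(d\ge 1\) from \(P\). By maximality \(B\) has no children, so it is a leaf of \(T\). It is also not \(P\), hence non-prime, because \(P\) is the unique prime bag.

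Now split on \(d\). If \(d\ge 2\), the unique neighbor of \(B\) is its parent, which lies at distance \(d-1\ge 1\) from \(P\). That parent is therefore not \(P\), hence non-prime, and \(B\) is the required leaf bag.

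If instead \(d=1\), every bag other than \(P\) is adjacent to \(P\). Then \(T\) is a star centered at \(P\), which is exactly the case excluded by hypothesis. (This includes the case where \(P\) is a leaf of \(T\) and \(T\) is a single edge.)

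There is no substantive obstacle. The only care needed is the convention that "star centered at \(P\)" covers the one-bag and two-bag trees. Under that reading the contrapositive is immediate, since \(d\le 1\) holds exactly when \(T\) is such a star.
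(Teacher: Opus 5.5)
Your proof is correct and follows the paper's argument: take a bag \(B\) farthest from \(P\), note that it is a leaf, and use the hypothesis that \(T\) is not a star centered at \(P\) to get distance at least \(2\), so the neighbor of \(B\) is not \(P\) and hence is non-prime. Your explicit treatment of the one-bag and two-bag trees, under the convention that these count as stars centered at \(P\), makes precise a point the paper leaves implicit; it does not change the argument.
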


\begin{proof}
Since \(T\) is not a star centered at \(P\), there exists a bag at distance at
least \(2\) from \(P\). Choose a bag \(B\neq P\) maximizing
\(\operatorname{dist}_T(B,P)\). In a tree, every vertex farthest from a fixed
vertex is a leaf, so \(B\) is a leaf bag. Because
\(\operatorname{dist}_T(B,P)\ge 2\), the unique neighbor of \(B\) is not \(P\).
As \(P\) is the only prime bag, that neighbor bag is non-prime. \qedbox
\end{proof}

The next lemma shows that, in the non-star case, removing a leaf-attached star
bag with a single ordinary leaf preserves membership in \(\mathcal{D}(H)\).

\begin{lemma}\label{lem:peeling-preserves-DHclass}
Let \(H\) be a split-prime graph, let \(G\in \mathcal{D}(H)\), and fix a reduced
graph-labelled-tree representation \(T\) of \(G\) with unique prime bag \(P\).
Suppose that \(T\) is not a star centered at \(P\), and let \(B\) be a non-prime
leaf bag whose unique neighbor bag is non-prime. Assume that \(B\) is a
leaf-attached star bag with exactly one ordinary leaf \(x\) and center \(c\).
Let \(A\) be the unique neighbor bag of \(B\), and let \(a\) be the marker
vertex of \(A\) corresponding to the tree edge joining \(A\) to \(B\). Then:
\begin{enumerate}
    \item \(G-x\in \mathcal{D}(H)\);
    \item \(G-\{x,c\}\in \mathcal{D}(H)\).
\end{enumerate}
\end{lemma}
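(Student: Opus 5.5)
We will build an explicit graph-labelled tree for each of \(G-x\) and \(G-\{x,c\}\) by local surgery near \(B\) in \(T\). We then check three things: the new tree realizes the right graph under the accessibility relation, it has at most one prime bag, and that bag's label is an induced subgraph of \(H\). Reducedness is then restored by the standard contraction steps. None of these steps touches \(P\): by hypothesis \(A\neq P\), so every modification happens at \(B\) and \(A\) and leaves the label of \(P\) untouched. It is convenient to first record two general facts. Deleting an ordinary vertex from a label graph realizes the vertex-deleted graph. Re-reducing (contracting a degenerate bag of size \(\le 2\), merging adjacent clique bags or \(S_pS_c\) star pairs, and possibly splitting) never creates a new prime bag. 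Moreover, re-reducing can only replace the prime label by one isomorphic to itself or to a vertex-deleted, hence induced, subgraph of \(H\).

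\emph{Deletion of \(x\).} The bag \(B\) is a star with center \(c\), leaves \(x\) and \(m_B\) (it has exactly three vertices by Lemma~\ref{lem:reduced-basic-facts}). Removing \(x\) leaves the label graph \(c\!-\!m_B\), a \(2\)-vertex bag. We contract it into \(A\) by replacing the marker \(a\) with the ordinary vertex \(c\). Since \(c\sim m_B\) in \(B\), accessibility shows that \(c\) has, in \(G-x\), exactly the neighbors that passed through \(a\). So the new tree represents \(G-x\), and \(A\) is relabelled by an isomorphic graph, of the same type (clique or star). If this creates a \(KK\) or \(S_pS_c\) edge, we perform the corresponding merge. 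In all cases \(P\) is unchanged, so \(G-x\in\mathcal{D}(H)\). Connectivity of \(G-x\) is clear since \(x\) is a leaf of \(G\) by Lemma~\ref{lem:leaf-bag-geometry}.

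\emph{Deletion of \(\{x,c\}\).} Deleting \(B\) entirely corresponds to deleting the marker \(a\) from \(A\), because \(c\) is the only vertex of \(G\) reached through \(a\). We therefore remove \(B\) and delete \(a\) from the label of \(A\). Removing a vertex from a clique leaves a clique. Removing a leaf from a star leaves a star. These are the benign cases, possibly followed by contraction if \(A\) drops to two vertices. The one case to handle is when \(a\) is the center of a star bag \(A\). Then deleting it disconnects the label. Correspondingly, \(G-\{x,c\}\) may be disconnected, since \(c\) was a cut vertex-like hub joining the components hanging off the leaves of \(A\).

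This last case is the main obstacle. Membership in \(\mathcal{D}(H)\) requires connectedness, so I would first check whether reducedness rules the configuration out. Reducedness forbids an \(S_pS_c\) edge, and the edge \(AB\) is attached to \(B\) through a leaf-marker, so \(A\) cannot be attached to it through its center-marker. Hence \(a\) is never the center of a star \(A\), and the bad case cannot occur. Thus \(A\) minus \(a\) is a clique or star with connected label, and the accessibility argument gives a connected representation of \(G-\{x,c\}\). It still has \(P\) as its only possible prime bag, so \(G-\{x,c\}\in\mathcal{D}(H)\). Verifying this use of Lemma~\ref{lem:reduced-edge-types}, together with the accessibility bookkeeping, is where the real care is needed.
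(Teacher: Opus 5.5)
Your proposal is correct and follows essentially the same route as the paper: for (1) you replace the marker $a$ by the ordinary vertex $c$, and for (2) you delete $B$ and the marker $a$ from $A$, using the prohibition of $S_pS_c$ edges (Lemma~\ref{lem:reduced-edge-types}) to rule out $a$ being the center of a star bag $A$, exactly as the paper does. Your preliminary ``general facts'' about re-reducing are unnecessary, and overstated as general claims; this covers both the mention of splitting and the claim that the prime label could be replaced by a vertex-deleted subgraph. They are harmless here, since the label of $P$ is never modified and only clique or star bags are ever contracted or merged.
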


   \begin{proof}[Proof sketch]
Let \(m_B\) be the marker of \(B\). Since \(B\) is a leaf-attached star bag
with exactly one ordinary leaf \(x\), its label graph consists of the three
vertices \(m_B,c,x\), with edges \(m_Bc\) and \(cx\).

Because \(B\) is leaf-attached, the marker \(m_B\) is a leaf of the star label
of \(B\). Hence, by Lemma~\ref{lem:reduced-edge-types}, if the neighboring bag
\(A\) is a star bag, then the corresponding marker \(a\) in \(A\) cannot be the
center of that star; thus \(a\) is a leaf whenever \(A\) is a star bag.

For (1), delete the bag \(B\) and the tree edge \(AB\), and reinterpret the
marker \(a\) of \(A\) as an ordinary vertex labeled \(c\). By the accessibility
relation, this changes neither the adjacencies among the remaining old ordinary
vertices nor the neighborhood of \(c\), so the resulting graph-labelled tree
represents \(G-x\). No prime bag label changes, and after the standard reduction
operations if necessary, we obtain a reduced graph-labelled-tree representation
of \(G-x\) with the same unique prime bag label as before. Hence
\(G-x\in\mathcal{D}(H)\).

For (2), instead delete the bag \(B\), the tree edge \(AB\), and the marker
\(a\) from the label graph of \(A\). Again by the accessibility relation, the
resulting graph-labelled tree represents \(G-\{x,c\}\), since any accessibility
path using the edge \(AB\) necessarily ended at one of the deleted vertices
\(x\) or \(c\). Because \(A\) is non-prime and \(a\) is either a clique vertex
or a leaf of a star, deleting \(a\) leaves a clique bag or a star bag; after
standard reductions if necessary, the unique prime bag label is preserved.
Therefore \(G-\{x,c\}\in\mathcal{D}(H)\).

The full proof is given in Appendix~A.

 \qedbox
\end{proof}

We now analyze the complementary case in which every non-prime leaf bag is
adjacent to the prime bag.

\begin{lemma}\label{lem:prime-star-reduction}
Let \(H\) be a split-prime graph, let \(G\in \mathcal{D}(H)\), and fix a reduced
graph-labelled-tree representation \(T\) of \(G\) whose unique prime bag is
\(P\). Suppose that \(T\) is a star centered at \(P\). Then one of the
following holds:
\begin{enumerate}
    \item \(G\) contains a twin pair; or
    \item there exists an induced subgraph \(Q\) of \(G\) isomorphic to the
    label graph of \(P\) such that \(G\) is obtained from \(Q\) by attaching
    pendant vertices to vertices of \(Q\).
\end{enumerate}
\end{lemma}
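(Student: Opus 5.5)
The plan is a case analysis over the leaf bags of \(T\). Since \(T\) is a star centered at \(P\), every bag other than \(P\) is a non-prime leaf bag adjacent to \(P\). If \(T\) consists of \(P\) alone, then \(G\) is the label graph of \(P\) and we take \(Q=G\) with no pendants attached. Otherwise, consider each leaf bag \(B\) together with the marker \(p_B\) of \(P\) that corresponds to the edge \(PB\).

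The case analysis uses the earlier results directly. If some leaf bag is a clique bag or a center-attached star bag, Lemma~\ref{lem:leaf-bag-geometry}(1)--(2) gives a twin pair, and we are in alternative (1). If some leaf-attached star bag has at least two ordinary leaves, Corollary~\ref{cor:multi-leaf-star-twins} again gives alternative (1). So we may assume that every leaf bag \(B\) is a leaf-attached star bag with exactly one ordinary leaf \(x_B\) and center \(c_B\). Its label graph is then the path \(m_B\,c_B\,x_B\).

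Next we define \(Q\). Let \(Q\) consist of the ordinary vertices of \(P\) together with the centers \(c_B\), where \(c_B\) plays the role of the marker \(p_B\). By the accessibility relation, \(c_B\) is adjacent to \(m_B\) in \(B\), so \(c_B\) is adjacent to exactly those vertices that \(p_B\) sees in \(P\), continued through the other bags. Two ordinary vertices of \(P\) are adjacent in \(G\) exactly when they are adjacent in the label of \(P\). An ordinary vertex \(u\) of \(P\) is adjacent to \(c_{B}\) exactly when \(u\sim p_B\) in \(P\). Two centers \(c_B,c_{B'}\) are adjacent exactly when \(p_B\sim p_{B'}\) in \(P\), since each \(c\) is adjacent to its own marker \(m_B\). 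Hence the map sending \(p_B\mapsto c_B\) and fixing ordinary vertices is an isomorphism from the label graph of \(P\) onto \(G[Q]\).

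The remaining vertices of \(G\) are the leaves \(x_B\). By Lemma~\ref{lem:leaf-bag-geometry}(3), each \(x_B\) is a leaf of \(G\) whose unique neighbor is \(c_B\in Q\). Therefore \(G\) is obtained from \(Q\) by attaching one pendant vertex to each \(c_B\), which is alternative (2).

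The main obstacle is making the accessibility bookkeeping in the isomorphism step careful, in particular the adjacency \(c_B\sim c_{B'}\). The path \(B,P,B'\) requires \(c_B\sim m_B\) in \(B\), \(p_B\sim p_{B'}\) in \(P\), and \(m_{B'}\sim c_{B'}\) in \(B'\). The first and last conditions always hold, so adjacency is decided by \(P\) alone. I would also check that no leaf bag can contain an ordinary vertex other than \(c_B\) and \(x_B\), which holds because its label graph has exactly three vertices.
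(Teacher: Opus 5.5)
Your proposal is correct and follows essentially the same route as the paper: you rule out clique, center-attached, and multi-leaf leaf bags via Lemma~\ref{lem:leaf-bag-geometry} and Corollary~\ref{cor:multi-leaf-star-twins}, take \(Q\) to be the ordinary vertices of \(P\) together with the centers \(c_B\), verify the isomorphism \(p_B\mapsto c_B\) by the accessibility relation, and read off the pendants \(x_B\) from Lemma~\ref{lem:leaf-bag-geometry}(3). Your separate treatment of the case where \(T\) consists of \(P\) alone is harmless; the paper's argument covers it implicitly as the case with no leaf bags.
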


\begin{proof}
Due to space constraints, see the third subsection of the appendix. 
\end{proof}

The next lemma shows that path-extremality is preserved under such pendant
extensions, provided the base graph and all of its induced subgraphs are already
path-extremal.

\begin{lemma}\label{lem:pendant-extension}
Let \(Q\) be a graph such that every induced subgraph of \(Q\) is path-extremal.
Let \(G\) be a graph obtained from \(Q\) by attaching pendant vertices to
vertices of \(Q\). Then \(G\) is path-extremal.
\end{lemma}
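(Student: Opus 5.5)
We argue by induction on the number \(p\) of pendant vertices attached to \(Q\), and it is cleanest to prove a slightly stronger statement. Say that \(G\) is a \emph{pendant extension} of \(Q\) if it is obtained from an induced subgraph \(Q'\) of \(Q\) by attaching pendant vertices to vertices of \(Q'\). We claim that every pendant extension of \(Q\) is path-extremal, and we induct on \(|V(G)|\).

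Every induced subgraph of \(Q\) is again a graph all of whose induced subgraphs are path-extremal. So when no pendant vertices are attached, \(G=Q'\) is path-extremal by hypothesis.

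Now suppose at least one pendant vertex \(x\) has been attached, and let \(v\in V(Q')\) be its attachment vertex.

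\begin{itemize}
\item If two pendant vertices are attached to \(v\), they are false twins. In that case \(G\) is path-extremal by Corollary~\ref{cor:twin-path-extremal}.
\item Otherwise \(x\) is a leaf of \(G\), and we have two graphs to examine.
  \begin{itemize}
  \item \(G-x\) is a pendant extension of \(Q'\), hence of \(Q\), with fewer vertices.
  \item \(G-\{x,v\}\) needs more care. Deleting \(v\) from \(Q'\) leaves the induced subgraph \(Q'-v\) of \(Q\). Any other pendant vertices at \(v\) have already been excluded. So \(G-\{x,v\}\) is obtained from \(Q'-v\) by attaching the remaining pendant vertices at their original, still present, attachment vertices. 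It is therefore also a pendant extension of \(Q\).
  \end{itemize}
\end{itemize}

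We then repeat the computation in the proof of Theorem~\ref{thm:dh-path-extremal}. The induction hypothesis together with Lemma~\ref{lem:path-nonforcing} gives
\[
z'(G-x;k)\ge \binom{n-k-2}{k},\qquad z'(G-\{x,v\};k-1)\ge\binom{n-k-2}{k-1}.
\]
Lemma~\ref{lem:leaf-recursion} and Pascal's identity then yield
\[
z'(G;k)\ge\binom{n-k-1}{k}=z'(P_n;k)
\]
for \(k\ge1\), and the case \(k=0\) is trivial.

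One degenerate point needs checking. If \(x\) was attached but \(v\) had no other neighbors in \(Q'\), then \(v\) may itself be a leaf of \(G\), or \(G=K_2\). This causes no trouble, because the leaf recursion only requires \(x\) to be a leaf.

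The main obstacle is the step that keeps \(G-\{x,v\}\) inside the inductive class. It fails if we only induct on "pendant extensions of \(Q\) itself", since removing \(v\in V(Q)\) destroys the base. That is exactly why the hypothesis covers all induced subgraphs of \(Q\), and why we strengthen the statement to bases \(Q'\) that are induced subgraphs of \(Q\). With that strengthening the two cases, twin pair or genuine leaf, are exhaustive, and the induction closes.
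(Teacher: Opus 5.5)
Your proof is correct and matches the paper's argument. Two pendants at one vertex give false twins; otherwise you apply the leaf recursion at a pendant \(x\), with \(G-x\) and \(G-\{x,v\}\) handled by induction, the latter through the induced subgraph \(Q-v\). The only difference is cosmetic: you induct on \(|V(G)|\) with the statement explicitly strengthened to all induced subgraphs \(Q'\) of \(Q\), whereas the paper inducts on the number of pendant vertices with the statement implicitly quantified over every base graph satisfying the hereditary hypothesis, which \(Q-v\) inherits.
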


\begin{proof}
We argue by induction on the number \(t\) of pendant vertices attached to \(Q\).
The case \(t=0\) is immediate.

Assume \(t\ge 1\). If some vertex of \(Q\) supports at least two added pendant
vertices, then those pendant vertices are false twins in \(G\), and so \(G\) is
path-extremal by Corollary~\ref{cor:twin-path-extremal}.

We may therefore assume that no vertex of \(Q\) supports more than one added
pendant vertex. Let \(x\) be one of the pendant vertices, and let \(v\in V(Q)\)
be its unique neighbor.

Then \(G-x\) is obtained from \(Q\) by attaching \(t-1\) pendant vertices, so
\(G-x\) is path-extremal by the induction hypothesis. Moreover, because \(v\)
supports no other pendant vertex, the graph \(G-\{x,v\}\) is obtained from the
induced subgraph \(Q-v\) by attaching at most \(t-1\) pendant vertices. Since
every induced subgraph of \(Q\) is path-extremal, the same is true of \(Q-v\),
and hence the induction hypothesis implies that \(G-\{x,v\}\) is path-extremal.

Applying Lemma~\ref{lem:leaf-recursion} to the leaf \(x\), we obtain for every
\(k\ge 1\),
$z'(G;k)\ge z'(G-x;k)+z'(G-\{x,v\};k-1)
        \ge z'(P_{n-1};k)+z'(P_{n-2};k-1)
        = z'(P_n;k)$,
where \(n=|V(G)|\), and the last equality follows from
Lemma~\ref{lem:path-nonforcing} and Pascal's identity. The case \(k=0\) is
trivial. Therefore \(G\) is path-extremal. \qedbox
\end{proof}

\begin{theorem}\label{thm:one-prime-bag}
Let \(H\) be a split-prime graph. Suppose that every induced subgraph of \(H\)
is path-extremal. Then every graph in \(\mathcal{C}(H)\) is path-extremal.
\end{theorem}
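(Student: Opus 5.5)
We will prove the stronger statement that every graph in \(\mathcal{D}(H)\) is path-extremal. Since \(\mathcal{C}(H)\subseteq\mathcal{D}(H)\), this gives the theorem. The enlargement to \(\mathcal{D}(H)\) is what allows the induction to close, because deleting vertices may shrink the prime bag to an induced subgraph of \(H\). We argue by strong induction on \(n=|V(G)|\), with small \(n\) trivial.

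Let \(G\in\mathcal{D}(H)\). If \(G\) has no prime bag, it is distance-hereditary and we are done by Theorem~\ref{thm:dh-path-extremal}. If \(G\) contains a twin pair, we are done by Corollary~\ref{cor:twin-path-extremal}. So we may assume \(G\) is twin-free. Fix a reduced representation \(T\) with unique prime bag \(P\), whose label is isomorphic to an induced subgraph \(H'\) of \(H\).

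\emph{Case 1: \(T\) is a star centered at \(P\)} (this includes \(T=P\)). By Lemma~\ref{lem:prime-star-reduction}, and since \(G\) is twin-free, \(G\) is obtained from an induced copy \(Q\cong H'\) by attaching pendant vertices. Every induced subgraph of \(Q\) is an induced subgraph of \(H\), hence path-extremal by hypothesis. Lemma~\ref{lem:pendant-extension} then shows that \(G\) is path-extremal.

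\emph{Case 2: \(T\) is not a star centered at \(P\).} By Lemma~\ref{lem:leaf-far-from-prime} there is a non-prime leaf bag \(B\) whose neighbor \(A\) is non-prime.
\begin{itemize}
\item If \(B\) is a clique bag or a center-attached star bag, Lemma~\ref{lem:leaf-bag-geometry} produces a twin pair, which is excluded.
\item If \(B\) is a leaf-attached star bag with at least two ordinary leaves, Corollary~\ref{cor:multi-leaf-star-twins} produces a twin pair, which is excluded.
\end{itemize}
Hence \(B\) is leaf-attached with exactly one ordinary leaf \(x\) and center \(c\). By Lemma~\ref{lem:leaf-bag-geometry}, \(x\) is a leaf of \(G\) whose unique neighbor is \(c\).

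By Lemma~\ref{lem:peeling-preserves-DHclass}, both \(G-x\) and \(G-\{x,c\}\) lie in \(\mathcal{D}(H)\). One point needs checking here: these graphs must be connected, since \(\mathcal{D}(H)\) consists of connected graphs. For \(G-x\) this is clear, because \(x\) is a leaf. For \(G-\{x,c\}\), connectivity should follow from the reduced representation given in that lemma, since deleting a marker from a clique or star bag with at least three label vertices keeps the tree connected. If connectivity were to fail, we would instead handle each component, using the fact that \(z'\) is superadditive across a disjoint union in the appropriate sense. I expect this point, together with the "after reductions" step of the peeling lemma, to be the main technical obstacle. The counting itself is routine.

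By induction, both \(G-x\) and \(G-\{x,c\}\) are path-extremal. Lemma~\ref{lem:leaf-recursion} then gives, for \(k\ge1\),
\[
z'(G;k)\ge z'(P_{n-1};k)+z'(P_{n-2};k-1)=\binom{n-k-1}{k}=z'(P_n;k).
\]
This uses Lemma~\ref{lem:path-nonforcing} and Pascal's identity, exactly as in the proof of Theorem~\ref{thm:dh-path-extremal}. The case \(k=0\) is trivial. This completes the induction and proves the theorem.
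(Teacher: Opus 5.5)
Your proof is correct and follows the paper's argument essentially step for step: you strengthen the claim to $\mathcal{D}(H)$, induct on $n$, and dispose of twin pairs. In the star-centered case you use Lemma~\ref{lem:prime-star-reduction} with Lemma~\ref{lem:pendant-extension}, and otherwise you use the leaf recurrence with Lemma~\ref{lem:peeling-preserves-DHclass}; only the case order and the up-front exclusion of twins are cosmetic differences. The connectivity of $G-\{x,c\}$ that you flag is assumed without comment in the paper's peeling lemma. It does hold because every label graph stays connected after the marker is deleted, so your disjoint-union fallback is never needed.
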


\begin{proof}
Due to space constraints, this has been moved to the second subsection of the appendix. 
\end{proof}

\begin{corollary}\label{cor:bounded-prime-size}
Let \(m\in \mathbb{N}\). Suppose that every induced subgraph of every split-prime
graph on at most \(m\) vertices is path-extremal. Then every graph in
\(\mathcal{C}_{\le m}\) is path-extremal.
\end{corollary}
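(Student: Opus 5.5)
The corollary should follow from Theorem~\ref{thm:one-prime-bag} applied bag by bag. Let \(G\in\mathcal{C}_{\le m}\). By definition \(G\) is connected, and its canonical split decomposition has exactly one prime bag \(P\), whose label graph has at most \(m\) vertices. Let \(H\) be the label graph of \(P\). As the label of a prime bag, \(H\) is split-prime in the nondegenerate sense fixed in the preliminaries: it is connected, has no split, and is neither a clique nor a star. Since \(|V(H)|\le m\), the hypothesis of the corollary says that every induced subgraph of \(H\) is path-extremal. Moreover \(G\in\mathcal{C}(H)\), because its canonical decomposition has a unique prime bag whose label graph is (isomorphic to) \(H\).

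Theorem~\ref{thm:one-prime-bag} applies to this \(H\) and gives that every graph in \(\mathcal{C}(H)\), and in particular \(G\), is path-extremal. Since \(G\in\mathcal{C}_{\le m}\) was arbitrary, the corollary follows, so the argument is simply \(\mathcal{C}_{\le m}=\bigcup_{H}\mathcal{C}(H)\), with \(H\) ranging over the split-prime graphs on at most \(m\) vertices.

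The only point needing care is the identification of "size" of a bag with the number of vertices of its label graph, marker vertices included. Bounding the bag size by \(m\) therefore bounds \(|V(H)|\) by \(m\), which is exactly the quantity in the hypothesis. No step here is a real obstacle: all the substantive work (peeling leaf bags via Lemma~\ref{lem:peeling-preserves-DHclass}, the star-case reduction of Lemma~\ref{lem:prime-star-reduction}, and Lemma~\ref{lem:pendant-extension}) is already contained in Theorem~\ref{thm:one-prime-bag}.
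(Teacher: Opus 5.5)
Your proof is correct and is essentially the paper's argument: take \(H\) to be the label graph of the unique prime bag of \(G\), note that \(|V(H)|\le m\) so every induced subgraph of \(H\) is path-extremal by hypothesis, observe that \(G\in\mathcal{C}(H)\), and invoke Theorem~\ref{thm:one-prime-bag}. Your extra remarks, that \(H\) is split-prime because it labels a prime bag and that bag size counts marker vertices, make explicit what the paper's short proof leaves implicit.
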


\begin{proof}
Let \(G\in \mathcal{C}_{\le m}\), and let \(H\) denote the label graph of the
unique prime bag of the canonical split decomposition of \(G\). By definition,
\(|V(H)|\le m\). Hence every induced subgraph of \(H\) is path-extremal by
hypothesis, and Theorem~\ref{thm:one-prime-bag} applies. \qedbox
\end{proof}

\begin{remark}
Theorem~\ref{thm:one-prime-bag} shows that the only genuinely new obstruction
beyond the distance-hereditary case occurs when all non-prime bags are attached
directly to the prime bag. In that star-centered configuration, the graph either
already contains a twin pair or reduces to a split-prime core together with
pendant extensions. Corollary~\ref{cor:bounded-prime-size} therefore reduces the
path-extremal conjecture on \(\mathcal{C}_{\le m}\) to a finite verification
problem on bounded-order split-prime bases.
\end{remark}

\bibliographystyle{splncs04}
\bibliography{cocoon_zero_forcing_refs}

\appendix
\section*{Appendix}
\subsection*{Proof of Lemma~\ref{lem:peeling-preserves-DHclass}}
\begin{lemma}
Let \(H\) be a split-prime graph, let \(G\in \mathcal{D}(H)\), and fix a reduced
graph-labelled-tree representation \(T\) of \(G\) with unique prime bag \(P\).
Suppose that \(T\) is not a star centered at \(P\), and let \(B\) be a non-prime
leaf bag whose unique neighbor bag is non-prime. Assume that \(B\) is a
leaf-attached star bag with exactly one ordinary leaf \(x\) and center \(c\).
Let \(A\) be the unique neighbor bag of \(B\), and let \(a\) be the marker
vertex of \(A\) corresponding to the tree edge joining \(A\) to \(B\). Then:
\begin{enumerate}
    \item \(G-x\in \mathcal{D}(H)\);
    \item \(G-\{x,c\}\in \mathcal{D}(H)\).
\end{enumerate}
\end{lemma}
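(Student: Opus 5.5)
Our plan is to build explicit graph-labelled trees for \(G-x\) and \(G-\{x,c\}\) from \(T\) by a local change at the edge \(AB\). We then check two things: that the new trees represent the right graphs, via the accessibility relation, and that reduction never creates a new prime bag.

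We first pin down the local structure. Since \(B\) is a leaf-attached star bag with one ordinary leaf, its label is the path \(m_B - c - x\). In \(G\), the vertex \(x\) has the unique neighbour \(c\), by Lemma~\ref{lem:leaf-bag-geometry}(3). The neighbourhood of \(c\) is \(\{x\}\) together with all ordinary vertices \(w\) outside \(B\) that are accessible from the marker \(a\) in \(T\); the latter means the accessibility path starting at \(a\) inside \(A\) reaches \(w\). Since \(m_B\) is a leaf of the star of \(B\), Lemma~\ref{lem:reduced-edge-types}(2) forbids \(a\) from being the centre of a star, so \(A\) is either a clique bag or a star bag in which \(a\) is a leaf.

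For (1), let \(T_1\) be the tree obtained by deleting \(B\) and the edge \(AB\) and declaring \(a\) an ordinary vertex named \(c\). We verify \(T_1\) represents \(G-x\). For two ordinary vertices both distinct from \(c\), no accessibility path in \(T\) between them passes through the edge \(AB\), since \(B\) is a leaf bag; hence their adjacency is unchanged. For \(c\) and a vertex \(w\), accessibility from \(a\) in \(T_1\) is, by construction, the same as accessibility from \(m_B\)'s neighbour \(c\) through \(AB\) in \(T\). Thus \(T_1\) represents \(G-x\), and \(G-x\) is connected because \(x\) was a leaf.

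\(T_1\) need not be reduced, but every label is still a clique, a star, or the unchanged label of \(P\). We then apply the standard reduction operations, which merge adjacent clique bags or adjacent star bags along a compatible edge, or remove degenerate bags with at most two vertices. The point is that such merges involve only degenerate bags, and a merge of two degenerate bags of the same kind is again degenerate. Hence the resulting reduced tree has at most one prime bag, and its label is still isomorphic to an induced subgraph of \(H\), so \(G-x\in\mathcal{D}(H)\).

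For (2), let \(T_2\) be obtained by deleting \(B\), the edge \(AB\), and the vertex \(a\) from the label of \(A\). Every accessibility path in \(T\) using the edge \(AB\) ends at \(x\) or \(c\), so \(T_2\) represents \(G-\{x,c\}\) by the same argument as in (1). Deleting a clique vertex from a clique leaves a clique. Deleting a leaf of a star leaves a star, or a \(K_2\) if \(A\) had exactly three vertices. If \(A\) becomes a bag with only two vertices, we contract it: its two remaining elements are a pass-through, so we identify the two incident tree edges or attach the remaining ordinary vertex directly. These steps, and any subsequent merges, again touch only degenerate bags, so \(P\) remains the unique prime bag with its original label.

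The remaining point is connectivity of \(G-\{x,c\}\), which is needed for the tree to be a valid representation. We expect this to be the main obstacle. We would argue that every remaining bag is reached through labels that are still connected. Clique and star labels stay connected after deleting one non-centre vertex, and no other label changes. By accessibility, the represented graph is connected as soon as every bag label is connected, which holds here. If this argument falters in some degenerate case, the fallback is to note that \(T_2\) still exhibits \(G-\{x,c\}\) as a graph whose only non-degenerate label comes from \(P\). The other delicate point is making sure the reduction operations never introduce a prime label; we would check each standard reduction rule individually.
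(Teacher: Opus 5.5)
Your proposal is correct and follows essentially the same route as the paper: for (1) you delete \(B\) and promote the marker \(a\) to the ordinary vertex \(c\), for (2) you delete \(a\), you check via accessibility that no path between surviving vertices used the edge \(AB\), and you note that reductions only merge or contract degenerate bags, so \(P\)'s label is untouched. Your extra care with two-vertex bags and with connectivity of \(G-\{x,c\}\) fills in details the paper leaves to ``standard reduction operations,'' and the connectivity argument is sound: every label stays connected, hence so does the represented graph. The fallback you mention is therefore unnecessary.
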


\begin{proof}
Let \(m_B\) denote the marker of \(B\). Since \(B\) is a leaf-attached star bag
with exactly one ordinary leaf \(x\), the label graph of \(B\) consists of the
three vertices \(m_B,c,x\), with edges \(m_Bc\) and \(cx\).

Because \(B\) is leaf-attached, the marker \(m_B\) is a leaf of the star label
of \(B\). Since \(T\) is reduced, the tree edge joining \(A\) to \(B\) cannot
be of type \(S_pS_c\) (Lemma~\ref{lem:reduced-edge-types}). In particular, the
configuration in which \(A\) is a star bag and \(a\) is the center of that star
cannot occur. Therefore, if \(A\) is a star bag, then \(a\) is necessarily a
leaf of that star.

For (1), form a new graph-labelled tree \(T_x\) from \(T\) as follows:
delete the bag \(B\) and the tree edge joining \(A\) to \(B\), and in the label
graph of \(A\) reinterpret the marker vertex \(a\) as an ordinary vertex labeled
\(c\). We claim that \(T_x\) represents the graph \(G-x\).

Indeed, all ordinary vertices of \(G\) distinct from \(x\) and \(c\) remain in
the same bags as before, and their pairwise adjacencies are unchanged. It
remains only to check adjacencies involving \(c\). Let \(u\) be any ordinary
vertex of \(G\) distinct from \(x\) and \(c\). In the original tree \(T\), the
vertex \(c\) lies in the leaf bag \(B\), so any accessibility path from \(c\) to
\(u\) must first use the tree edge \(BA\). Since \(c\) is adjacent to the marker
\(m_B\) in \(B\), the accessibility condition for \(c\) and \(u\) in \(T\) is
equivalent to the accessibility condition obtained by replacing \(c\) with the
marker \(a\) in \(A\). But in \(T_x\), the ordinary vertex \(c\) occupies exactly
the former position of \(a\) in the label graph of \(A\). Hence \(c\) has in
\(T_x\) exactly the same neighborhood in \(G-x\) as it had in \(G\). Thus \(T_x\)
represents \(G-x\).

Moreover, \(T_x\) has the same unique prime bag \(P\) as \(T\). No bag label is
changed except that one non-prime bag \(A\) has one marker reinterpreted as an
ordinary vertex. No prime bag label is changed. If necessary, applying the standard reduction
operations to \(T_x\) yields a reduced graph-labelled-tree representation of
\(G-x\) with the same unique prime bag label as before. Hence
\(G-x\in \mathcal{D}(H)\).

For (2), form a graph-labelled tree \(T_{x,c}\) by deleting the bag \(B\), the
tree edge joining \(A\) to \(B\), and the marker vertex \(a\) from the label
graph of \(A\). We first show that \(T_{x,c}\) represents the graph
\(G-\{x,c\}\). All ordinary vertices of \(G-\{x,c\}\) remain in the same bags as
before. Since the bag \(B\) has been removed, the only possible issue is whether
deleting the marker \(a\) from \(A\) changes adjacencies among the remaining
ordinary vertices. It does not: any accessibility path in \(T\) that uses the
edge \(AB\) necessarily ends at the ordinary vertex \(c\) or at the deleted leaf
\(x\), because \(B\) is a leaf bag and contains no other ordinary vertices.
Hence, after deleting \(\{x,c\}\), no adjacency among the remaining ordinary
vertices is witnessed through the marker \(a\). Thus \(T_{x,c}\) represents
\(G-\{x,c\}\).

Since \(A\) is non-prime, it is either a clique bag or a star bag. If \(A\) is a
clique bag, then deleting \(a\) leaves a smaller clique bag. If \(A\) is a star
bag, then the first paragraph of the proof shows that \(a\) is a leaf of that
star, so deleting \(a\) leaves a smaller star bag. In either case, all bags of
\(T_{x,c}\) are still clique, star, or prime bags, and the unique prime bag
remains \(P\). The only possible failure of reducedness is that a non-prime bag
may become degenerate or become reducible with an adjacent non-prime bag.
Applying the standard reduction operations restores reducedness without changing
the represented graph and without altering the unique prime bag \(P\). Hence
\(G-\{x,c\}\in \mathcal{D}(H)\). \qedbox
\end{proof}

\subsection*{Full proof of Theorem~\ref{thm:one-prime-bag}}
\begin{proof}
We prove the stronger statement that every graph in \(\mathcal{D}(H)\) is
path-extremal, by induction on \(n=|V(G)|\).

Let \(G\in \mathcal{D}(H)\), and fix a reduced graph-labelled-tree
representation \(T\) of \(G\) with at most one prime bag.

If \(T\) has no prime bag, then \(G\) is distance-hereditary, and the result
follows from Theorem~\ref{thm:dh-path-extremal}. Thus we may assume that \(T\)
has a unique prime bag \(P\).

If \(T\) has only one bag, then \(G\) is isomorphic to the label graph of \(P\).
Since the label graph of \(P\) is isomorphic to an induced subgraph of \(H\),
the hypothesis on \(H\) implies that \(G\) is path-extremal.

We now assume that \(T\) has more than one bag.

\noindent\emph{Case 1: \(T\) is not a star centered at \(P\).}
By Lemma~\ref{lem:leaf-far-from-prime}, there exists a non-prime leaf bag \(B\)
whose unique neighbor bag is non-prime.

If \(B\) is a clique bag or a center-attached star bag, then \(G\) contains a
twin pair by Lemma~\ref{lem:leaf-bag-geometry}, and hence \(G\) is path-extremal
by Corollary~\ref{cor:twin-path-extremal}. If \(B\) is a leaf-attached star bag
with at least two ordinary leaves, then \(G\) contains a twin pair by
Corollary~\ref{cor:multi-leaf-star-twins}, and again \(G\) is path-extremal.

We may therefore assume that \(B\) is a leaf-attached star bag with exactly one
ordinary leaf \(x\). Let \(c\) be the center of \(B\). By
Lemma~\ref{lem:leaf-bag-geometry}, the vertex \(x\) is a leaf of \(G\) with
unique neighbor \(c\). Hence Lemma~\ref{lem:leaf-recursion} gives $z'(G;k)\ge z'(G-x;k)+z'(G-\{x,c\};k-1)$
for every \(k\ge 1\).

By Lemma~\ref{lem:peeling-preserves-DHclass}(1), we have
\(G-x\in \mathcal{D}(H)\), so the induction hypothesis implies that \(G-x\) is
path-extremal.

Moreover, Lemma~\ref{lem:peeling-preserves-DHclass}(2) gives
\(G-\{x,c\}\in \mathcal{D}(H)\), so the induction hypothesis implies that
\(G-\{x,c\}\) is path-extremal. Thus both graphs on the right-hand side are
path-extremal.

Therefore, for every \(k\ge 1\),
$z'(G;k)\ge z'(G-x;k)+z'(G-\{x,c\};k-1)
        \ge z'(P_{n-1};k)+z'(P_{n-2};k-1)
        = z'(P_n;k)$,
where the last equality follows from
Lemma~\ref{lem:path-nonforcing} and Pascal's identity. The case \(k=0\) is
trivial, so \(G\) is path-extremal.

\noindent\emph{Case 2: \(T\) is a star centered at \(P\).}
By Lemma~\ref{lem:prime-star-reduction}, either \(G\) contains a twin pair, in
which case \(G\) is path-extremal by Corollary~\ref{cor:twin-path-extremal}, or
there exists an induced subgraph \(Q\) of \(G\) isomorphic to the label graph of
\(P\) such that \(G\) is obtained from \(Q\) by attaching pendant vertices to
vertices of \(Q\).

Since the label graph of \(P\) is isomorphic to an induced subgraph of \(H\),
the same is true of \(Q\). Consequently, every induced subgraph of \(Q\) is path-extremal by the
hypothesis on \(H\). Lemma~\ref{lem:pendant-extension} now implies that \(G\) is
path-extremal.

This completes the induction. Since \(\mathcal{C}(H)\subseteq \mathcal{D}(H)\),
the theorem follows. \qedbox
\end{proof}

\subsection*{Proof of Lemma~\ref{lem:prime-star-reduction}}
\begin{proof}
Let \(B\) be any non-prime leaf bag adjacent to \(P\). If \(B\) is a clique bag
or a center-attached star bag, then \(G\) contains a twin pair by
Lemma~\ref{lem:leaf-bag-geometry}. If \(B\) is a leaf-attached star bag with at
least two ordinary leaves, then \(G\) contains a twin pair by
Corollary~\ref{cor:multi-leaf-star-twins}.

We may therefore assume that every non-prime leaf bag is a leaf-attached star
bag with exactly one ordinary leaf. Let the leaf bags be \(B_1,\dots,B_t\). For
each \(i\), let \(m_i\) be the marker of \(B_i\), let \(c_i\) be the center of
\(B_i\), let \(x_i\) be the unique ordinary leaf of \(B_i\), and let \(p_i\) be
the marker vertex of \(P\) corresponding to the tree edge joining \(P\) to
\(B_i\).

Let \(O(P)\) denote the set of ordinary vertices of the prime bag \(P\), and
define $Q := G[\,O(P)\cup \{c_1,\dots,c_t\}\,]$.
We claim that \(Q\) is isomorphic to the label graph of \(P\). Define a map
$\phi : V(Q) \to V(P)$
by fixing each ordinary vertex of \(P\) and sending \(c_i\) to \(p_i\).

We check adjacency. If \(u,v\in O(P)\), then \(u\) and \(v\) are adjacent in \(Q\)
if and only if they are adjacent in the label graph of \(P\), since they lie in
the same bag. If \(u\in O(P)\), then \(c_i\) is adjacent to \(u\) in \(G\) if
and only if \(c_i\) is adjacent to \(m_i\) in \(B_i\) and \(p_i\) is adjacent
to \(u\) in \(P\), by the accessibility relation. Since \(c_i\sim m_i\) in
\(B_i\), this is equivalent to \(p_i\sim u\) in \(P\). Similarly, for \(i\neq j\),
the vertices \(c_i\) and \(c_j\) are adjacent in \(G\) if and only if
\(c_i\sim m_i\) in \(B_i\), \(p_i\sim p_j\) in \(P\), and \(m_j\sim c_j\) in
\(B_j\); again this is equivalent to \(p_i\sim p_j\) in \(P\). Thus \(\phi\) is
an isomorphism from \(Q\) to the label graph of \(P\).

Finally, by Lemma~\ref{lem:leaf-bag-geometry}, each \(x_i\) is a leaf of \(G\)
adjacent only to \(c_i\). Hence \(G\) is obtained from \(Q\) by attaching the
pendant vertices \(x_1,\dots,x_t\) to the vertices \(c_1,\dots,c_t\),
respectively. \qedbox
\end{proof}
\end{document}